\documentclass[12pt]{article}

\usepackage{epsfig}
\usepackage{subfigure}
\usepackage{calc}
\usepackage{amssymb}
\usepackage{amstext}
\usepackage{amsmath}
\usepackage{multicol}
\usepackage{pslatex}
\usepackage[small]{caption}

\usepackage{MnSymbol}
%  macros
\usepackage{color}
\usepackage{balance}
\usepackage{latexsym}
\usepackage{graphicx}
\usepackage{algorithm}
\usepackage[noend]{algorithmic}
\usepackage[T1]{fontenc}
\usepackage{epic}
\usepackage{eepic}
 \usepackage{amsfonts}
\usepackage{stmaryrd}
\usepackage[all]{xy}
\usepackage{booktabs}
\usepackage{multirow}
\usepackage{rotating}
%\usepackage{amstex}
%\usepackage{multicol}
%\usepackage{bibtopic}

% \theoremstyle{plain}
%\newdef{definition}{Definition}
%\newdef{notation}{Notation}
%\newdef{example}{Example}
%\newdef{remark}{Remark}
%\newdef{fact}{Fact}

\newtheorem{theorem}{Theorem}

\newtheorem{corollary}[theorem]{Corollary}
\newtheorem{definition}{Definition}
\newtheorem{example}{Example}

% \newtheorem{expfact}{Experimental Fact}{\bfseries}{\itshape}

% algorithmic
\algsetup{linenodelimiter=.} 
\algsetup{indent=1em}

\definecolor{light-gray}{gray}{0.90}
% \numberwithin{algorithm}{chapter}
% \renewcommand{\algorithmicensure}{\textbf{Output:}}

%\input{lstdefinitions.tex}
%\input{oldstuff/listings-def.tex}

% ntheorem stuff
\newcounter{theorem-backup}

 \newtheorem{proof}{Proof}

\newcommand{\Z}{\mathbb{Z}}
\newcommand{\R}{\mathbb{R}}

\newcommand{\B}{\mathbb{B}}

\usepackage{listings}
\lstloadlanguages{C++,Pascal}

\lstset{
language=C++,
tabsize=1,
showlines=false,
emptylines=*1,
breaklines=true,
breakindent=20pt,
basicstyle=\ttfamily\small,
keywordstyle=[2]\itshape,
commentstyle=\itshape,
columns=fixed,
showspaces=false,
showstringspaces=false,
showtabs=false
}

\lstdefinelanguage{PseudoC}[ISO]{C++} { 
morekeywords={foreach, and, not, or, is, FIFO_Queue, HashTable, FILE, Cache}, 
%morekeywords=[2]{Enqueue, Dequeue, next, head_queue, tail_queue, swapin_file, swapout_file, head_queue_elements, tail_queue_elements, hash, init_states, swap_in, swap_out, collision_rate, ram_queue_size, Q_ck, Q_unck, deleted_in_cloud, deleted_not_in_cloud, GetDiskCloud, DiskCloud, beta, disk_cloud_size, speedup, slowdown, D, h, M, init_states, deleted_states,  something_not_in_cloud, Calibration_Required, QueryCalibration, Block}, 
morekeywords=[2]{HashInsert, Enqueue, Dequeue, next}, 
mathescape=true,
alsoletter={'},
deletestring=[b]'
}

\lstdefinelanguage{PseudoCWithNumbers}[ISO]{C++} { 
morekeywords={foreach, and, not, or, is, FIFO_Queue, HashTable, FILE, Cache}, 
%morekeywords=[2]{Enqueue, Dequeue, next, head_queue, tail_queue, swapin_file, swapout_file, head_queue_elements, tail_queue_elements, hash, init_states, swap_in, swap_out, collision_rate, ram_queue_size, Q_ck, Q_unck, deleted_in_cloud, deleted_not_in_cloud, GetDiskCloud, DiskCloud, beta, disk_cloud_size, speedup, slowdown, D, h, M, init_states, deleted_states,  something_not_in_cloud, Calibration_Required, QueryCalibration, Block}, 
morekeywords=[2]{HashInsert, Enqueue, Dequeue, next}, 
mathescape=true,
alsoletter={'},
deletestring=[b]'
}

\lstdefinelanguage{Murphi}[]{Pascal} { 
morekeywords={ruleset, rule, invariant, startstate, return, endif, endfor, endswitch, forall, endforall, exists, endexists}, 
mathescape=true, 
morestring=[b]",
morestring=[b]', 
morecomment=[s]{/*}{*/} ,
morecomment=[l]{--}
}

\lstdefinelanguage{PRISM}[ISO]{C++} { 
morekeywords={probabilistic, stochastic, const, rate, module, endmodule, init, P, U},
mathescape=true, 
alsoletter={'},
deletestring=[b]'
}

\lstdefinelanguage{Yacc}[ISO]{C++} { 
morekeywords={token, left}, 
mathescape=false,
alsoletter={'},
deletestring=[b]'
}

\lstdefinestyle{PseudoC}{
language=PseudoC,
basicstyle=\ttfamily,%\small, 
tabsize=1,
showlines=false,
emptylines=*1,
breaklines=true,
breakindent=5pt,
keywordstyle=\rmfamily\bfseries,%\normalsize, 
keywordstyle=[2]\rmfamily,%\normalsize, 
commentstyle=\itshape, 
columns=fixed,
showspaces=false, 
showstringspaces=false, 
showtabs=false, 
escapechar=\%
}

\lstdefinestyle{PseudoCWithNumbers}{
language=PseudoC,
basicstyle=\ttfamily,%\small, 
tabsize=1,
showlines=false,
emptylines=*1,
breaklines=true,
breakindent=5pt,
keywordstyle=\rmfamily\bfseries,%\normalsize, 
keywordstyle=[2]\rmfamily,%\normalsize, 
commentstyle=\itshape, 
columns=fixed,
showspaces=false, 
showstringspaces=false, 
showtabs=false, 
escapechar=\%,
numbersep=5pt,framexleftmargin=15pt,numbers=left,
}

\lstdefinestyle{Murphi}{
language=Murphi,
basicstyle=\ttfamily,%\small, 
tabsize=1,
showlines=false,
emptylines=*1,
breaklines=true,
breakindent=5pt,
keywordstyle=\rmfamily\bfseries,%\normalsize, 
commentstyle=\itshape, 
columns=fixed,
showspaces=false, 
showstringspaces=false, 
showtabs=false,
escapechar=\%}

\lstdefinestyle{PRISM}{
language=PRISM,
basicstyle=\ttfamily,%\small, 
tabsize=1,
showlines=false,
emptylines=*1,
breaklines=true,
breakindent=5pt,
keywordstyle=\rmfamily\bfseries,%\normalsize, 
commentstyle=\itshape, 
columns=fixed,
showspaces=false, 
showstringspaces=false, 
showtabs=false,
escapechar=\%}

\lstdefinestyle{Yacc}{
language=PseudoC,
basicstyle=\ttfamily,%\small, 
tabsize=1,
showlines=false,
emptylines=*1,
breaklines=true,
breakindent=5pt,
keywordstyle=\rmfamily\bfseries,%\normalsize, 
keywordstyle=[2]\rmfamily,%\normalsize, 
commentstyle=\itshape, 
columns=fixed,
showspaces=false, 
showstringspaces=false, 
showtabs=false, 
}

\newcommand{\qks}{\mbox{QKS}}

\renewcommand{\eqref}[1]{(\ref{#1})}

\definecolor{Blue}{rgb}{0.25,0.33,0.77}
\definecolor{Red}{rgb}{1,0,0}
\definecolor{Green}{rgb}{0,1,0}

%\subfigtopskip=0pt
%\subfigcapskip=0pt
%\subfigbottomskip=0pt
% acronyms

%\newcommand{\theproject}{SW4Green\xspace}

% Home Intelligent Automation Services

% Grid Intelligent Automation Services

%\title{Synthesis of Quantized Feedback Control Software for Discrete Time Linear Hybrid Systems}

%\title{Quantized Feedback Control Software Synthesis for Discrete Time Linear Hybrid Systems}

\title{\Large \bf
%Quantized Control Software Synthesis for Discrete Time Hybrid Systems from Closed-Loop Formal Specification
Automatic Control Software Synthesis for Quantized Discrete Time Hybrid Systems
}

\author{Vadim Alimghuzin, Federico Mari, Igor Melatti,\\
Ivano Salvo, Enrico Tronci\\
\small \itshape Department of Computer Science\\
\small \itshape Sapienza University of Rome\\
\small \itshape via Salaria 113, 00198 Rome\\
\small email: \{alimghuzin,mari,melatti,salvo,tronci\}@di.uniroma1.it\\
\small \itshape Vadim Alimghuzin is also with the Department of Computer Science and Robotics\\
\small \itshape Ufa State Aviation Technical University\\
\small \itshape 12 Karl Marx Street, Ufa, 450000, Russian Federation}
%\small \itshape Department of Computer Science\\
%\small \itshape Sapienza University of Rome\\
%\small \itshape via Salaria 113, 00198 Rome\\
%\small email: \{mari,melatti,salvo,tronci\}@di.uniroma1.it}
%
%\author{}
%\author{\authorname{Vadim Alimghuzin\sup{1}\sup{2}, Federico~Mari\sup{1}, ~Igor~Melatti\sup{1}, ~Ivano~Salvo\sup{1} and Enrico~Tronci\sup{1}}
%\affiliation{\sup{1} Computer Science Department, Sapienza University of Rome,
%    Via Salaria 113, 00198 Roma, Italy.}
%\affiliation{\sup{2}Department of Computer Science and Robotics
%Ufa State Aviation Technical University
%12 Karl Marx Street, Ufa, 450000, Russian Federation}
%\email{\{alimguzhin,mari,melatti,salvo,tronci\}@di.uniroma1.it, t\_author@dc.mu.edu}
%}

%\maketitle

\begin{document}

  \maketitle  
%\thispagestyle{empty}
%\pagestyle{empty}

  %\keywords{Modelling, Analysis and Control of Hybrid Dynamical Systems, 
%Computer and Microprocessor-based Control}

\begin{abstract}
Many \emph{Embedded Systems} are indeed \emph{Software Based Control Systems},
that is control systems whose controller consists of \emph{control software} running on a 
microcontroller device. 
This motivates investigation on \emph{Formal Model Based Design} 
approaches for automatic synthesis of embedded systems control software. 
This paper addresses control software synthesis for discrete time {\em nonlinear} systems.  
We present a methodology to overapproximate the dynamics of a discrete time 
nonlinear hybrid system ${\cal H}$ by means of a discrete time {\em linear} hybrid system
${\cal L}_{\cal H}$, in such a way that controllers for ${\cal L}_{\cal H}$ are guaranteed to be
controllers for ${\cal H}$.  
We present experimental results on 
the inverted pendulum, a challenging and meaningful benchmark in 
nonlinear Hybrid Systems control. 
\end{abstract}

%  \onecolumn \maketitle \normalsize \vfill

%  \tableofcontents
 
  \section{Introduction}
\label{intro.tex}
%\vspace*{-2mm}
%\input{framework.tex}
%\input{contribution.tex}   % add figure with AD, DA
%\input{related-works.tex}  % adjust uppaal, add abstraction

        %\subsection{General framework and motivations}

%model based design of software: from craftwork to engineering
% omitted for lack of space

%\IEEEPARstart{M}{odel} Based Design (MBD) 
%of embedded software 
%\cite{Henzinger-Sifakis-fm06} 
%holds the promise of evolving embedded system design from craftwork to a 
%rigorous and systematic discipline that from models and 
%formal specifications synthesizes 
%correct-by-construction software meeting the given specifications.
%This approach is indeed very interesting when specifications are much easier
%to define than the software behavior itself. 
%This is definitely the case for embedded control systems.

Many \emph{Embedded Systems} are indeed \emph{Software Based Control Systems} (SBCSs).
An SBCS 
consists of two main subsystems:
the \emph{controller} and the \emph{plant}.
Typically, the plant is a physical system
consisting, for example, of mechanical or electrical devices
whereas the controller 
consists of \emph{control software} running on a microcontroller.
In an endless loop, 
%at discrete time instants (\emph{sampling}),
the controller
%, after an \emph{Analog-to-Digital} (AD) conversion (\emph{quantization}), 
reads \emph{sensor} outputs from the plant and
%, possibly after a \emph{Digital-to-Analog} (DA) conversion,
sends commands to plant \emph{actuators}
%The controller selects commands
in order to guarantee that the 
\emph{closed loop system}
(that is, the system consisting of both
plant and controller) meets given
\emph{safety} and \emph{liveness} specifications
(\emph{System Level Formal Specifications}).

%\marginpar{\Red{Bisognerebbe rispondere al reviewer che si chiede whether the work is worthwhile}. Done (penso). et}
%\IEEEPARstart{S}{oftware} 
Software generation from models
and formal specifications forms the core of
\emph{Model Based Design} of embedded software %\marginpar{cancellare ``embedded''? - NO et}
\cite{Henzinger-Sifakis-fm06}.
This approach is particularly interesting for SBCSs
%\emph{Software Based Control Systems} (SBCSs)
%control systems %\marginpar{cancellato ``embedded'' - OK et}
since in such a case system level (formal) specifications are much easier
to define than the control software behavior itself.

% control system def
%Embedded systems are often safety or mission critical control systems.

%A \emph{control system} %\marginpar{cancellato ``embedded'' - OK et}
%(e.g. see \cite{modern-control-theory-1990}) 
%consists of two subsystems 
%(forming the \emph{closed loop} system):
%the \emph{controller} and the \emph{plant}. 
%In an endless 
%loop the controller measures outputs from and sends commands to the
%plant in order to drive it towards a given 
%\emph{goal}.
%
% plant is hardware, controller is software
%The plant usually is a physical system,  
%for example it can be a mechanical system or an electrical circuit. 
%whereas 
%A \emph{Software Based Control System} (SBCS) is a control system
%in which the controller consists of \emph{control software} running
%on a microcontreller. Many embedded systems as well as many
%control systems are indeed SBCSs.

The typical control loop skeleton for an SBCS is the following.
Measure $x$ of the system state from plant \emph{sensors} go through an
\emph{analog-to-digital} (AD) conversion, yielding a {\em quantized} value $\hat{x}$. 
A function \texttt{ctrlRegion} checks if $\hat{x}$ belongs to the region in which 
the control software works correctly. If this is not the case a \emph{Fault Isolation and 
Recovery} (FDIR) procedure is triggered, otherwise a function \texttt{ctrlLaw} computes a 
command $\hat{u}$ to be sent to plant \emph{actuators} after a \emph{digital-to-analog} (DA) 
conversion.
Basically, the control software design problem for SBCSs consists
in designing software implementing functions
%\marginpar{meglio ``functions'' che ``procedures''}
\texttt{ctrlLaw} and
\texttt{ctrlRegion}.

%\input{control-loop-figure.tex}

%\subsection{The Separation-of-Concerns Approach}
%\label{separation-of-concerns}

For SBCSs, system level specifications are typically
given with respect to the desired behavior 
of the closed loop system. The 
\emph{control software} 
(that is, \texttt{ctrlLaw} and \texttt{ctrlRegion})
is designed using a 
\emph{separation-of-concerns} approach. That is, 
\emph{Control Engineering} techniques 
(e.g., see \cite{modern-control-theory-1990})
are used
%IM_0413<b>
%to design \emph{control laws} (i.e. a functional
to design, from the closed loop system level specifications,
\emph{functional specifications} (\emph{control law})
for the \emph{control software}
whereas
\emph{Software Engineering} techniques are used to
design control software implementing the given
functional specifications.
Such a separation-of-concerns approach has 
several
%the following
drawbacks. 

First, usually control engineering techniques do not
yield a formally verified specification for the
control law %or controllable region 
when quantization is taken into account. 
This is particularly the case when the plant has to be modelled 
as a \emph{Hybrid System}, 
that is a system with continuous as well as discrete state changes
\cite{alur-sfm04,alg-hs-tcs95,HHT96,AHH96}.
As a result, even if the control software meets its functional
specifications there is no formal guarantee that
system level specifications are met since
quantization effects are not formally accounted for.

Second, issues concerning computational resources, 
such as control software \emph{Worst Case Execution Time} (WCET),
can only be considered very late in the SBCS design activity, namely once the 
software has been designed. As a result, the control software may have a WCET
greater than the sampling time. 
%(line \ref{sampling-and-hold} in Fig. \ref{control-loop-figure.tex}). 
This invalidates the schedulability
analysis (typically carried out before the control software is completed)
and may trigger redesign of the software or even 
of its functional specifications
(in order to simplify its design).

Last, but not least, the classical 
separation-of-concerns approach does not effectively support 
design space exploration for the control software. 
In fact, although in general there will be many functional specifications
for the control software that will allow meeting the given
system level specifications, the software engineer only
gets one 
%functional specification
 to play with. This overconstrains a priori 
 the design space
for the control software implementation preventing, for example, effective performance trading 
(e.g., 
between number of bits in AD conversion,  
WCET, 
RAM usage, 
CPU power consumption, 
etc.).
We note that the above considerations also apply
to the typical situation where
Control Engineering techniques are used to design a control
law and then tools like Simulink are used to generate the control
software. 

%TOLTO: IVANO
%Even when the control law is automatically generated
%and proved correct
%(for example, as in \cite{pessoa-cav10}) such an approach
%does not yield any formal guarantee about the software correctness
%since quantization of the state measurements
%is not taken into account in the computation of the control
%law. 
%Thus such an approach cannot answer questions like:
%1) Can 8 bit AD be used or instead we need, say, 12 bit AD?
%2) Will the control software code run \emph{fast enough} 
%   on a, say, 1 MIPS microcontroller 
%(that is, is the control software WCET less than the sampling time)?
%3) What is the controllable region?

The previous considerations motivate research on 
Software Engineering methods and tools focusing on control software synthesis
(rather than on control law synthesis as in Control Engineering).
The objective is that from the plant model (as a hybrid system), 
from formal specifications for the closed loop system behavior 
%(\emph{System Level Formal Specifications}) 
and from \emph{Implementation Specifications} 
(that is, number of bits used in the quantization process)
such methods and tools can generate correct-by-construction control software
satisfying the given specifications.

The tool QKS \cite{qks-cav10} synthesise control software for 
{\em Discrete Time Linear Hybrid Systems} (DTLHSs). 
However, the dynamics of many interesting hybrid systems 
cannot be directly modeled by linear predicates. 
The focus of the present paper is control software 
synthesis for {\em nonlinear} Discrete Time Hybrid Systems. 

%In \cite{qks-cav10}, we presented an algorithm that, given a 
%{\em Discrete Time Linear Hybrid System} (DTLHS) ${\cal L}$ as plant model, 
%a quantization schema (i.e. how many bits we use for AD conversion),  
%and system level formal specifications, 
%automatically synthetises control software (i.e. C functions 
%\texttt{ctrlLaw} and \texttt{ctrlRegion}), 
%whenever suitable sufficient conditions holds. 
%In DTLHSs, the dynamics of a Hybrid System 
%is modeled as a linear predicate over present and next state variables.
%However, the dynamics of many interesting hybrid systems 
%cannot be directly modeled by linear predicates. 
%The focus of the present paper is control software synthesis for non linear Hybrid Systems. 

        %\input{motivations.tex}
        %\marginpar{Pero' il problema e' indecidibile, non e' vero che se c'e' noi lo troviamo, come questa frase parrebbe implicare;
%usare (semi)algorithm? Comunque, dire che il problema e' indecidibile}
%\vspace*{-1mm}
\subsection{Our Main Contributions}

%\paragraph{Short introduction of the model}
We model the controlled system (plant) as a
\emph{Discrete Time Hybrid System} 
(DTHS), that is a discrete time hybrid system
whose dynamics is modeled as a {\em predicate} (possibly non linear)  
over a set of continuous as well as discrete 
variables that describe system state, system inputs and disturbances.

System level safety as well as liveness specifications are modeled as sets of states
defined, in turn, as predicates. 
In our setting, as always in control problems, liveness constraints define the
set of states that any evolution of the closed loop system should eventually reach
(\emph{goal states}). Using an approach similar to the one in
 \cite{decidability-hybrid-automata-jcss98}, in~\cite{cav2010-tekrep-art-2011} 
 it has been proven that both existence of a controller  
 and existence of a \emph{quantized} controller for DTHSs are undecidable problems, 
 even for very restricted classes of DTHSs.
Accordingly, we can only hope for non complete or semi-algorithms.

In this paper we present a general approach to deal with discrete time non-linear hybrid systems.
The basic idea is 
%under- and over- approximate the behaviour 
%of a non linear function by using piecewise affine functions. 
%Under and overapproximation are calculated by using Taylor polinomial of degree 1.  
%This leads us 
to overapproximate the behaviour of a DTHS ${\cal H}$ by means of a 
DTLHS ${\cal L}_{\cal H}$.
Stemming from %Theorem~\ref{thm:ctr-refinement}, 
Corollary~\ref{cor:real-result},
that ensures that  
controllers for ${\cal L}_{\cal H}$ are guaranteed to be controllers for ${\cal H}$,  
%In particular, this implies that for any quantization schema, quantized controllers 
%for ${\cal L}_{\cal H}$ are guaranteed to be controllers for ${\cal H}$. 
we synthesize control software by giving as input to 
the tool QKS \cite{qks-cav10} the linear plant model ${\cal L}_{\cal H}$, the desired 
quantization schema, and system level formal specifications.

Since ${\cal L}_{\cal H}$ dynamics overapproximates the dynamics of ${\cal H}$, 
the controllers that we synthesize are inherently {\em robust}, that is they meet 
the given closed loop requirements %(safety as well as liveness) 
{\em notwithstanding} nondeterministic small \emph{disturbances} such as
variations in the plant parameters.
Tighter overapproximations makes finding a controller easier, whereas 
coarser overapproximations makes controllers more robust.
As in the linear case, the automatically generated software 
has a \emph{Worst Case Execution Time} (WCET)
guaranteed to be  linear in the number of bits of the state quantization schema. 
Moreover, control software computes commands in such a way that the closed 
loop system follows a (near) {\em time optimal} strategy to reach 
the goal~\cite{Girard:2010gf}.  
We present
experimental results %on using our tool %\qks \  
%on the synthesis of robust control software for a 
on the inverted pendulum benchmark~\cite{KB94}, a challenging and 
well studied example in control synthesis.

\subsection{Related Work}
\label{related-works.tex}
%\vspace*{-1mm}

%IM<b>
%Although  \emph{Quantized Feedback Control}(QFC) for linear systems
%Although  
%short for CAV10
%Our work differs from previous ones in the following aspects:
%\begin{enumerate}
%\item we provide a \emph{tool} for automatic synthesis of
%  correct-by-construction control software (rather than design
%  methodologies for the \emph{control law});
%%(2) we focus on DTLHSs rather than linear systems;
%\item we synthesize \emph{robust control software} (thus encompassing
%  quantization) whereas typically robust control synthesis does not
%  take into account the software implementation;
%\item in order to generate provably correct software, we assume a
%  nondeterministic (\emph{malicious}) model for quantization errors.
%\end{enumerate}

%\subsubsection{Control software synthesis}

%
%
%With respect to our paper we note that no previously published research work
%on hybrid system control addresses the issue of formally guaranteeing system level correctness
%of the generated software since no work accounts for state feedback quantization.
%
%
%Furthermore, \cite{LTLcon} and \cite{pessoa-cav10} do not support (natively)
%swithced systems.
%
%Of course 
Control Engineering has been studying
control law design (e.g., optimal control, robust control, etc.),
for more than half a century
(e.g., see~\cite{modern-control-theory-1990}).
Also \emph{Quantized Feedback Control} 
has been widely studied in control engineering
(e.g. see \cite{quantized-ctr-tac05}).
However such research does not address hybrid systems
(our case) and, as explained above, focuses on control law design rather
than on control software synthesis (our goal).
%As explained in Sect.~\ref{separation-of-concerns} such results
%cannot be directly used in our (formal) software synthesis context.
%On the other hand we note that there are many control systems that 
%are not software based (e.g., in analog circuit design).
%In such cases, of course, our approach cannot be used.
Furthermore, all control engineering approaches
model \emph{quantization errors} as statistical \emph{noise}. 
As a result, correctness of the control law holds in a probabilistic sense.
Here instead, we model quantization errors as nondeterministic 
(\emph{malicious})
\emph{disturbances}. This guarantees system level correctness of the generated
control software (not just that of the control law) 
with respect to \emph{any} possible sequence of quantization errors.
%Indeed, we synthesize control software which is \emph{robust} also
%with respect to nondeterministic variations in the plant parameters.
%Indeed, to the best of our knowledge, 
%no previously published work on hybrid system control
%addresses the issue of automatic generation
%of corret-by-construction control software satisfying given 
%\emph{implementation specifications} 
%(i.e., number of bits used in state feedback quantization) and 
%\emph{System Level Formal Specifications}.

%Using a separation-of-concerns approach as described in
%Section \ref{separation-of-concerns} 
When the plant model is a \emph{Linear Hybrid Automaton} (LHA)
%\cite{alur-sfm04,alg-hs-tcs95,HHT96,AHH96}
\cite{alg-hs-tcs95,AHH96}  % short for cav
%\cite{alr-timed-automata,henzinger-lha,timed-automata-cav1999},
%the control law  may not be simple
%(e.g. see \cite{optctr-ldths-hscc02}).
%This motivates design of algorithms and tools
%for generating the control law directly
%from the plant model and the closed loop specifications.
%In fact, from  \cite{decidability-hybrid-automata-jcss98}
%follows that
reachability and existence of a control law are both undecidable problems
\cite{dt-ctr-rect-aut-icalp97,decidability-hybrid-automata-jcss98}.
This, of course, has not prevented devising effective 
(semi) algorithms for such problems. Examples are in
\cite{AHH96,HHT96,phaver-sttt08,ctr-lha-cdc97,TLS99,Benerecetti:2011yq}.
Control software synthesis for continuous time linear systems
(no switching) has been implemented in the tool 
{\sc Pessoa} \cite{pessoa-cav10}. 
Such an approach exploits suitable finite state abstraction 
(e.g. see \cite{tabuada-quantized-hscc07,pola-girard-tabuada-2008})
to synthesize a control law computing commands
from real valued state measures (no quantization). 
The control software is then generated by passing to 
Simulink such a control law. 
In the same wavelength, \cite{Yordanov:2012ul} generates a
control strategy from a finite abstraction of a  
\emph{Piecewise Affine Discrete Time Hybrid Systems} (PWA-DTHS). 
Also the Hybrid Toolbox \cite{HybTBX} considers PWA-DTHS. 
Such a tool outputs a feedback control law that is 
then passed to Matlab in order to generate control software.
Finite horizon control of PWA-DTHS
%\cite{BM99,TB04} 
%\cite{BM99}
has been studied using
a MILP based approach.
See, for example, 
% long
%\cite{fh-optrctr-pwa-hs-acc00,dynprog-fhctr-aut05,sat-opt-ctr-hscc04,optctr-ldths-hscc02}.
%short
%\cite{dynprog-fhctr-aut05,sat-opt-ctr-hscc04,optctr-ldths-hscc02}.
\cite{sat-opt-ctr-hscc04}.
%IS<b>
%Such results cannot be directly used in our context since they
%address synthesis of finite horizon controllers.
%IS<e>
%
%Resting on the CMurphi verifier \cite{DIMTZ04j},
Explicit 
%(using CMurphi \cite{DIMTZ04j}) 
finite horizon control synthesis algorithms for discrete time 
(possibly non-linear) hybrid systems have been studied in 
%\cite{DIMMTT06,cgmurphi-icas-07}
% long
%\cite{cgmurphi-icas-07,icar08,upmurphi-icaps09}.
% short
\cite{icar08} and citations thereof.

We note that all such approaches do not account for state feedback 
quantization since they all assume \emph{exact} (i.e. real valued) state measures. 
Thus, as explained above, they do not
offer any formal guarantee about system level correctness 
of the generated software, which is instead our focus here.

Quantization can be seen as a sort of abstraction,
which has been widely studied in a 
hybrid system formal verification context
(e.g., see \cite{AHLP-ieee00,pred-abs-tecs06}).
%JKC-hscc07,JhaBS07,cegar-hs-cav07,cegar-par-synth-hscc08,alur-hs-cegar-tcs06,FKR-date06,mari-tronci-hscc07,TazImu09}).
Note however that in a verification context abstractions
are designed so as to ease the verification task whereas
in control software synthesis 
quantization is a design requirement since it models a hardware component
(AD converter) which is part of the specification of the control software synthesis problem.
Indeed, in our setting, we have to design a controller \emph{notwithstanding} the nondeterminism stemming
from the quantization process. As a result, the techniques used to devise clever abstractions
in a verification setting cannot be directly used in our synthesis setting where quantization is given.

The tool \qks ~\cite{qks-cav10} synthesize control software from 
system level specification for Discrete Time Linear Hybrid Systems whenever a
a constructive sufficient condition for control software existence holds.
Here, we address control software synthesis for a more general class 
of discrete time hybrid systems.  
%for a DTLHS.

In the context of Hybrid Systems verification, the overapproximation of Hybrid 
Systems with Linear Hybrid Systems has been studied in \cite{hypertech-hscc00} and 
\cite{HenzingerH95}. Such works consider dense time models, and focus on 
verification rather than control synthesis. Moreover, we observe that 
we can obtain tighter approximations, since DTLHSs allow us to model system dynamics 
with predicates that mix present and next state variables.

Correct-by-construction software synthesis
%design based on control synthesis 
in a finite state setting
has been studied, for example, in
% many settings.
%Examples are
%long
%\cite{emerson-toplas-04,synth-popl89,Tro96,Tro97,Tro98,Tro99a,Tro99,ctr-reactive-env-concur00,ctr-synth-des-tcs02,ltlctr-pappas-tac04,synth-interface-java-classes-popl05}.
%short
%\cite{emerson-toplas-04,Tro97,Tro98,Tro99a,Tro99,ctr-reactive-env-concur00,ltlctr-pappas-tac04,strong-planning-98}.
%very short
%\cite{emerson-toplas-04,Tro98,Tro99a,ltlctr-pappas-tac04,strong-planning-98}.
%very very short
\cite{emerson-toplas-04,Tro98,strong-planning-98}.
%IS<b>
%Note however that such results cannot be directly used in our setting since 
%DTLHSs have also
%continuous state variables.
Such approaches cannot be directly used in our context since they
cannot handle 
%do not address either unbounded horizon control synthesis or
%algebraic constraints or 
continuous state variables.

Summing up, to the best of our knowledge, no previously published
result is available about automatic generation
of correct-by-construction control software
from a DTHS model of the plant, 
\emph{system level formal specifications} and
\emph{implementation specifications} ({\em quantization}, that is number of bits in AD conversion).
\section{Background}
\label{basic.tex}
\label{vars-notation.lab}
%\vspace*{-1mm}
%This section is devoted to collect notation, basic definitions, and
%terminology frequently used throughout the rest of the paper.

We denote with $[n]$ an initial segment $\{1,\ldots, n\}$ of the natural numbers. 
We denote with $X$ = $[x_1, \ldots, x_n]$ a
finite sequence (list) of variables.
By abuse of language we may regard sequences as sets and
we use $\cup$ to denote list concatenation.
%Unless otherwise stated 
Each variable $x$ ranges on a known (bounded or unbounded)
interval ${\cal D}_x$ either of the reals or of the integers (discrete
variables). 
%We denote with $\sup(x)$ ($\inf(x)$) the $\sup$ ($\inf$)
%of ${\cal D}_x$.
We denote with ${\cal D}_X$ the set $\prod_{x\in X} {\cal D}_x$.
To clarify that a variable $x$ is {\em continuous} 
(i.e. real valued) we may write $x^{r}$. Similarly, 
to clarify that a variable $x$ is {\em discrete} 
(i.e. integer valued) we may write $x^{d}$. 
Analogously $X^{r}$ ($X^{d}$) denotes the sequence
of real (integer) variables in $X$.
Finally, boolean variables are discrete variables
ranging on the set $\B$ = \{0, 1\}. 
If $x$ is a boolean variable 
we write $\bar{x}$ for $(1 - x)$.
%Boolean variables are discrete variables
%ranging on the set $\B$ = \{0, 1\}. 
%We may write $x^{b}$ to denote a boolean variable.
%Analogously $X^{r}$ ($X^{d}$, $X^{b}$) denotes the sequence
%of real (integer, boolean) variables in $X$.
%%Unless otherwise stated, we suppose
%%${\cal D}_{X^r} = \R^{|X^r|}$ and ${\cal D}_{X^d} = \Z^{|X^d|}$. 
%Finally, if $x$ is a boolean variable 
%we write $\bar{x}$ for $(1 - x)$.

%\vspace*{-4mm}
\subsection{Predicates}
\label{subsection:predicates}
%\vspace*{-2mm}

\sloppy

An {\em expression} $E(X)$ over a list of variables $X$ is an expression of the form 
$\sum_{i\in[n]} a_i f_i(X)$, where 
$f_i(X)$ is a possibly nonlinear function over $X$ 
%${\cal C}^2$ function with respect to continous variables 
%in $X^r$ over ${\cal D}_{X^r}$ 
and $a_i$ are rational constants.
$E(X)$ is a {\em linear expression} 
if each $f_i(X)$ is a projection (i.e. $f_i(X)=x_i$), 
i.e. if it is a linear combination of variables $\sum_{i\in[n]}a_i x_i$. 
A constraint is an expression of the form $E(X)\leq b$, 
where $b$ is a rational constant.
In the following, we also write $E(X) \geq b$
for $-E(X) \leq -b$.

\fussy

{\em Predicates} are inductively defined as follows.
A {\em constraint} $C(X)$ over a list of variables $X$ is a predicate over 
$X$. 
If $A(X)$ and $B(X)$ are predicates over $X$, then $(A(X) \land B(X))$
and $(A(X) \lor B(X))$ are predicates over X.  Parentheses may be
omitted, assuming usual associativity and precedence rules of logical
operators.
A {\em conjunctive predicate} is a conjunction of constraints.
For conjunctive predicates we will also write: 
$E(X) = b$ for (($E(X) \leq b$) $\wedge$ ($E(X) \geq b$)) and $a \leq x \leq
b$ for $x \geq a \;\land\; x \leq
b$, where $x \in X$.
%

% A {\em valuation} $X^\ast\in {\cal D}_X$ over a list of variables $X$\marginpar{punto di ${\cal D}_X$ o funzione???}
% is a function $v$ that maps each variable $x \in X$ to a value $v(x)$
% in ${\cal D}_x$.
% A \emph{satisfying assignment} to a predicate $P$ over $X$ is a
% valuation $X^{*}$ such that $P(X^{*})$ holds. 
% Abusing notation, we may
% denote with $P$ the set of satisfying assignments to the predicate 
% $P(X)$. Two predicates $P$ and $Q$ over $X$ are {\em equivalent},
% notation $P\equiv Q$ if they have the same set of 
% satisfying assignments.

A {\em valuation} over a list of variables $X$
is a function $v$ that maps each variable $x \in X$ to a value $v(x)
\in {\cal D}_x$.
Given a valuation $v$, we denote with $X^\ast\in {\cal D}_X$ the sequence of values 
$[v(x_1),\ldots,v(x_n)]$. By abuse of language, 
we call valuation also the sequence of values $X^\ast$.
A \emph{satisfying assignment} to a predicate $P$ over $X$ is a
valuation $X^{*}$ such that $P(X^{*})$ holds. If a satisfying assignment to a
predicate $P$ over $X$ exists, we say that $P$ is {\em feasible}.
Abusing notation, we may
denote with $P$ the set of satisfying assignments to the predicate 
$P(X)$. Two predicates $P$ and $Q$ over $X$ are {\em equivalent},
denoted by $P\equiv Q$, if they have the same set of 
satisfying assignments. Two predicates $P$ and $Q$ are {\em equisatisfiable}
if $P$ is feasible iff $Q$ is feasible.

A variable $x\in X$ 
is said to be {\em bounded} in $P$ if 
%$\sup\{x \; | \; P(X)\}$ and $\inf\{x \; | \; P(X)\}$
%are both finite.
there exist $a$, $b \in {\cal D}_x$
such that $P(X)$ implies $a \leq x \leq b$.
A predicate $P$ is bounded if all its variables are bounded.
%
%
%
%
%Let $a$ be a real number and $x$ be a bounded variable. 
%We write 
%$\sup_{P(X)}(a x)$ [$\inf_{P(X)}(a x)$]
%for 
%$a\sup\{x \; | \; P(X)\}$ [$a\inf\{x \; | \; P(X)\}$]
%when $a \geq 0$ and for 
%$a\inf\{x \; | \; P(X)\}$ [$a\sup\{x \; | \; P(X)\}$]
%when $a < 0$. 
%
%Analogously, we write $\inf(a x)$ for $a \inf(x)$ if $a \geq 0$ and
%for $a \sup(x)$ if $a < 0$.  
%Let $L(X)$ = $\sum_{i=1}^{n}a_ix_i$ be 
%a linear expression and $P(X)$ be a bounded predicate. 
%We write $\sup_{P(X)}(L(X))$ for
%$\sum_{i=1}^{n} \sup_{P(X)}(a_i x_i)$ and 
%$\inf_{P(X)}(L(X))$ for $\sum_{i=1}^{n} \inf_{P(X)}(a_i x_i)$.

Given a constraint $C(X)$ and a fresh boolean variable ({\em guard}) $y \not\in X$,
the {\em guarded constraint} $y \to C(X)$ (if $y$ then $C(X)$) denotes
the predicate $((y = 0) \lor C(X))$. Similarly, we use $\bar{y} \to
C(X)$ (if not $y$ then $C(X)$) to denote the predicate $((y = 1) \lor
C(X))$.
A {\em guarded predicate} is a conjunction of 
either constraints or guarded constraints.
It is possible to show that, 
if a guarded predicate $P$ is bounded,
then $P$ can be transformed into an equivalent (bounded) conjunctive predicate~\cite{cav2010-tekrep-art-2011}. 

\subsection{Labeled Transition Systems}
\label{lts.tex}
%\vspace*{-1mm}

A \emph{Labeled Transition System} (LTS) is a tuple
${\cal S} = (S, A, T)$ where 
$S$ is a (possibly infinite) set of states, 
$A$ is a (possibly infinite) set of \emph{actions}, and 
$T$ : $S$ $\times$ $A$ $\times$ $S$ $\rightarrow$ $\B$
is the \emph{transition relation} of ${\cal S}$.
We say that $T$ (and ${\cal S}$) is {\em deterministic} if $T(s, a, s') \land
T(s, a, s'')$ implies $s' = s''$, and {\em nondeterministic}
otherwise. %\footnote{Note that, with this definition, deterministic LTSs are not
%a special case of nondeterministic ones, as it usually is.}.
%is a relation such that $\forall \; s, a$ $\exists s' \;$ $T(s, a, s')$.
%\end{definition}
%Note that in our definition of LTS we ask that in any state
%all actions are possible. 
%Illegal actions in a state can be modeled as usual with transitions to a
%\emph{sink} state.
%In the following
%${\cal S} = (S, A, T)$ is an LTS,
Let $s \in S$ and $a \in A$.
%
%
%If $\cal S$ = ($S$, $A$, $T$) is an LTS we also write
%$\mbox{\rm State}({\cal S})$ for $S$, 
%$\mbox{\rm Act}({\cal S})$ for $A$ and
%$\mbox{\rm Trans}({\cal S})$ for $T$.
%
%
%
%Action $u \in \mbox{\rm Act}({\cal P})$ is \emph{admissible} in 
%state $s \in \mbox{\rm State}({\cal P})$ if there exists 
%$s' \in \mbox{\rm State}({\cal P})$ s.t. 
%$\mbox{\rm Trans}({\cal P})(s, u, s')$.
%
We denote with 
$\mbox{\rm Adm}({\cal S}, s)$ the set of actions
admissible in $s$, that is $\mbox{\rm Adm}({\cal S}, s)$ = $\{a \in A
\; | \; \exists s': T(s, a, s') \}$
and with
$\mbox{\rm Img}({\cal S}, s, a)$ the set of next
states from $s$ via $a$, that is $\mbox{\rm Img}({\cal S}, s, a)$ =
$\{s' \in S \; | \; T(s, a, s') \}$.
%We call {\em transition} a triple $(s, a, s') \in S$ $\times$ $A$ $\times$ $S$,
%and {\em self loop} a transition $(s, a, s)$. A transition $(s, a, s')$ [self
%loop $(s, a, s)$] is a
%{\em transition [self loop] of ${\cal S}$} iff $T(s, a, s')$ [$T(s, a, s)$].
%We call {\em self-loop} a transition of the form $T(s, a, s)$.
%%Let $a \in \mbox{\rm Act}({\cal S})$.
%%We denote with $\mbox{\rm Img}({\cal S}, s, a)$ the set 
%%of next states from $s$ via $a$, that is: 
%%$\mbox{\rm Img}({\cal S}, s, a)$ =
%%$\{s' \in \mbox{\rm State}({\cal S})\; | 
%%\; \mbox{\rm Trans}({\cal S})(s, a, s') \}$.
%%
%%
A {\em run} or \emph{path}
for an LTS ${\cal S}$ 
%= ($S$, $A$, $T$)
is a sequence 
$\pi$ =
%$s(0)$$a(0)$$s(1)$$a(1)$ $s(2)$$a(2)\ldots$ 
$s_0, a_0, s_1, a_1, s_2, a_2, \ldots$ 
%of states $s(t)$ and actions $a(t)$ 
of states $s_t$ and actions $a_t$ 
%of ${\cal S}$ 
such that
%$x(0) \in I$ and
%$\forall t \geq 0$ $T(s(t), a(t), s(t+1))$.
$\forall t \geq 0$ $T(s_t, a_t, s_{t+1})$.
The length $|\pi|$ of a finite run $\pi$ is the number of actions
%IM<b>
%is this used???????
%YES!!!!! <IS>
%IM<e>
in $\pi$. 
%The length of an infinite run is infinite.
We denote with $\pi^{(S)}(t)$ the $(t + 1)$-th state element of
$\pi$, and with $\pi^{(A)}(t)$ the $(t + 1)$-th action element of
%$\pi$. That is $\pi^{(S)}(t)$ = $s(t)$, and $\pi^{(A)}(t)$ = $a(t)$.
$\pi$. That is $\pi^{(S)}(t)$ = $s_t$, and $\pi^{(A)}(t)$ = $a_t$.
%IM<b>
%A state $x$ of ${\cal S}$
%is $k$-{\em reachable} from $I \subseteq S$
%if there exists a run $\pi$
%of $\cal S$ and and integer $t \leq k$ s.t. 
%$\pi^{(S)}(0) \in I$ and $\pi^{(S)}(t)$ = $x$.
%IM<e>

Given two LTSs ${\cal S}_1$ $=$ $(S$, $A$, $T_1)$ and ${\cal S}_2$ $=$ $(S$,
  $A$, $T_2)$, we say that ${\cal S}_1$ \emph{refines} ${\cal
    S}_2$ (notation ${\cal S}_1 \sqsubseteq {\cal S}_2$) iff $T_{1}(s,
  a, s')$ implies $T_{2}(s, a, s')$ for each state $s,s' \in S$ and action $a \in A$.
The refinement relation is a partial order on LTSs.
\subsection{LTS Control Problem}
A \emph{controller} for an LTS ${\cal S}$ 
%(Def. \ref{def:ctroller-lts}) 
is used to restrict the dynamics of ${\cal S}$ 
so that all states in the initial region %($I$) 
will reach in one or more steps the goal region. %($G$).
In the following, we formalize such a concept
by defining strong %and weak 
solutions to an LTS control problem. 
In what follows, let ${\cal S} = (S, A, T)$ be an LTS, 
% ${\cal H} = (X, U, Y, N)$ be a DTLHS,
$I$, $G$ $\subseteq$ $S$ 
%[${\cal D}_{X}$] 
be, respectively, the {\em initial} and {\em goal} regions of ${\cal S}$.

\begin{definition}%[Controller for LTS]
  \label{def:ctroller-lts}
  \label{def:ctrproblem-lts}
  A \emph{controller} for 
%LTS 
${\cal S}$ is a function 
  $K : S \times A \rightarrow \B$
  such that $\forall s \in S$, $ \forall a \in A$, if $K(s, a)$ then
  $\exists s' \; T(s, a, s')$.
  %
%  We denote with 
$\mbox{\rm dom}(K)$ denotes the set of states for which at least a
  control action is enabled. Formally, $\mbox{\rm dom}(K)$ $=$ $\{s \in
  S \; | \; $$\exists a \; K(s, a)\}.$
  % 
%  We denote with 
${\cal S}^{(K)}$ denotes the \emph{closed loop system}, that
  is the LTS $(S, A, T^{(K)})$, where 
$T^{(K)}(s, a, s')$ $=$ $T(s, a, s') \wedge K(s, a)$.
%\end{definition}
% 
%\begin{itemize}
% not needed since all actions always possible.
%\item
%For all $s \in S$,  
%$a \in A$,
%if $K(s, a)$ then $\exists s' \; T(s, a, s')$.
%
%\item
%For all $s \in S$,  
%if $\exists a,\; s' \; T(s, a, s')$ 
%then
%$\exists a\;  K(s, a)$
%\end{itemize}
%
% We denote with $\mbox{\rm Dom}(K)$ the set of states
% for which a control action is defined.
% Formally,
% $\mbox{\rm Dom}(K)$ = 
% $\{s \in S \; | \;  \exists a \; K(s, a)\}$.
%\end{definition}
%
%
% If $K$ is a controller for ${\cal S}$  
% %= ($S$, $A$, $T$)
% we denote with ${\cal S}^{(K)}$ the \emph{closed loop system}, 
% that is the LTS 
% ($S$, $A$, $T^{(K)}$), where
% $T^{(K)}(s, a, s')$ = $T(s, a, s') \wedge K(s, a)$.
%
%
%\begin{definition}%[Control Problem for LTS]
%  \label{def:ctrproblem-lts}
%A controller for a DTLHS ${\cal H}$ is a controller
%for $\mbox{\rm LTS}(\cal H)$.
\end{definition}
%\vspace*{-3mm}
%In the following 
%\marginpar{Explanation added.et
%\\ Explanation revised - is}
%\begin{definition}
We call a path $\pi$ {\em fullpath} \cite{emerson-toplas-04}
 if either it is infinite or its last state 
$\pi^{(S)}(|\pi|)$ has no successors 
(i.e. $\mbox{\rm Adm}({\cal S}, \pi^{(S)}(|\pi|)) = \varnothing$).
We denote with ${\rm Path}(s, a)$ the set of fullpaths starting in state
$s$ with action $a$, i.e. the set of fullpaths $\pi$ such that $\pi^{(S)}(0)=s$
and $\pi^{(A)}(0)=a$.
%\end{definition}

Given a path $\pi$ in ${\cal S}$, 
%$J({\cal S},\pi,G)$ denotes the unique $n > 0$, if it exists, s.t.
%$[\pi^{(S)}(n)\in G] \land [\forall \; 0<i<n. \pi^{(S)}(i)\not\in G]$, 
%$+\infty$ otherwise.
we define $J({\cal S},\pi,G)$ as follows. If there exists $n > 0$ s.t.
$\pi^{(S)}(n)\in G$, then $J({\cal S},\pi,G) = \min\{n \;|\; n >
0 \land \pi^{(S)}(n)\in G\}$. Otherwise, $J({\cal S},\pi,G) = +\infty$.
We require $n > 0$ since
our systems are nonterminating and each controllable state (including a goal state)
must have a path of positive length to a goal state.
Taking ${\rm sup}\, \varnothing = +\infty$ and $\inf\, \varnothing = -\infty$, 
the {\em worst case distance} %(pessimistic view) 
of a state $s$ from the goal region  $G$
is  $J_{\rm strong}({\cal S},G,s)={\rm sup} \{ J_s({\cal S},G,s, a)~|~ a \in {\rm
Adm}({\cal S},s)\}$, being $J_s({\cal S},G,s,a)={\rm sup} \{ J({\cal S},G,\pi)~|~ \pi
\in{\rm Path}(s,a)\}$. 
%The {\em best case distance}  (optimistic view) of a
%state $s$  from the goal region $G$ is  $J_{\rm weak}({\cal S},G,s)={\rm sup} \{
%J_w({\cal S},G,s, a)~|~ a \in {\rm Adm}({\cal S},s)\}$, being $J_w({\cal
%S},G,s,a)=\inf \{ J({\cal S},G,\pi)~|~ \pi \in{\rm Path}(s,a)\}$.

%\end{definition}

%\vspace*{-1mm}
\begin{definition}
\label{def:sol}
A \emph{control problem} for 
%LTS 
${\cal S}$ 
%[${\cal H}$] 
is a triple 
  $\cal P$ = $({\cal S}, I, G)$. 
%  We suppose $G \subseteq I$ since a stabilizing 
%  controller is desired.
A {\em strong} %(resp. \emph{weak}) 
solution (or simply a solution) to %a control problem 
${\cal P}$ %= $({\cal S}$, $I$, $G)$ 
is 
a controller $K$ for ${\cal S}$, such that $I$ $\subseteq$ $\mbox{\rm dom}(K)$ and
for all $s \in \mbox{\rm Dom}(K)$, 
$J_{strong}({\cal S}^{(K)}, G, s)$ 
%(resp. $J_{weak}({\cal S}^{(K)}, G, s)$)
is finite.

An \emph{optimal} solution to ${\cal P}$
is a solution $K^{*}$ to ${\cal P}$ s.t.
for all solutions
$K$ to ${\cal P}$, 
for all $s \in {\cal D}_{X}$
 we have: 
$J_{strong}({\cal S}^{(K^{*})}, G, s) \leq J_{strong}({\cal S}^{(K)}, G, s)$.
%[$J_{weak}({\cal S}^{(K^{*})}, G, s) \leq J_{weak}({\cal S}^{(K)}, G, s)$].
The  \emph{most general optimal (mgo) solution}  to ${\cal P}$ 
is an optimal solution $\bar{K}$ to ${\cal P}$  s.t. 
for all optimal solutions $K$ to ${\cal P}$,
for all $s \in {\cal D}_{X}$,
for all $u \in {\cal D}_{U}$
we have: $K(s, u)$ $\rightarrow$ $\bar{K}(s, u)$. It is easy to see that
this definition is well posed (i.e., the mgo solution is unique) 
and that $\bar{K}$ does not depend on $I$.
%A {\em weak solution} to a control problem $({\cal S}, I, G)$ is a 
%controller $K$ for ${\cal S}$, such that $I$ $\subseteq$ $\mbox{\rm Dom}(K)$ and
%for all $s \in \mbox{\rm Dom}(K)$, 
%$J_{weak}({\cal S}^{(K)}, G, s)$ 
%is finite.
%  \begin{enumerate}
%  \item $I$ $\subseteq$ $\mbox{\rm Dom}(K)$ and
%  \item for all $s \in \mbox{\rm Dom}(K)$, $J_{strong}({\cal
%        S}^{(K)}, G, s)$ is finite.
%  \end{enumerate}  
%A {\em weak solution} to a control problem $({\cal S}, I, G)$ is a 
%controller $K$ for ${\cal S}$, such that:
%  \begin{enumerate}
%  \item $I$ $\subseteq$ $\mbox{\rm Dom}(K)$ and
%  \item for all $s \in \mbox{\rm Dom}(K)$, $J_{weak}({\cal
%        S}^{(K)}, G, s)$ is finite.
%  \end{enumerate}  
\end{definition}

  \section{Discrete Time Hybrid Systems} \label{dths.tex}

In this section we introduce our class of {\em Discrete Time Hybrid Systems} (DTHS for short), 
together with the DTHS representing the inverted pendulum 
%buck DC-DC converter 
on which our experiments will focus. 
Moreover, we will define in Sect.~\ref{sec:qcp} the {\em Quantized Control Problem}.
%Although our MILP-based
%abstraction technique requires conjunctive predicates,
%Stemming from Prop.~\ref{predconvex.prop}
%in Prop.~\ref{predconvex.prop} 
%we model the dynamics of closed loop systems
%by using (bounded) guarded predicates.

%We will model the dynamics of closed loop systems
%by using bounded predicates. 
%As shown in Prop.~\ref{predconvex.prop},  
%any bounded predicate can be transformed into  
%an equivalent conjunctive predicate. 
%Accordingly, in Def. \ref{dths.def},
%%without loss of generality, 
%we choose guarded predicates in order to
%simplify the description of the DTLHSs considered
%in examples and experimental results,
%notwithstanding our MILP-based
%abstraction technique requires conjunctive predicates. 
 
% Many classes of piecewise affine hybrid systems have been studied in
% the literature, e.g. \cite{AHH96,lpw_cav97}.  The same holds true
% for piecewise affine discrete time hybrid systems,
% e.g. \cite{BM99,TB04,GPB05,ABCS04}. The class of systems we are
% considering is essentially the one used in \cite{watertank-cdc01}.

%\vspace*{-1mm}
\begin{definition}%[DTLHS] 
\label{dths.def}

A {\em Discrete Time Hybrid System} %(DTLHS) 
is a tuple ${\cal H} = (X,$ $U,$ $Y,$ $N)$ where:

\begin{itemize}

\item 
  $X$ = $X^{r} \cup X^{d}$ % \cup X^{b}$ % \marginpar{per essere precisi, occorrerebbe mettere anche i booleani}
  is a finite sequence of real ($X^{r}$) and %,
  discrete ($X^{d}$) %and boolean ($X^{b}$) 
  {\em present state} variables.  
  We denote with $X'$ the sequence of 
  {\em next state} variables obtained 
  by decorating with $'$ all variables in $X$.
  
\item 
  $U$ = $U^{r} \cup U^{d}$ % \cup U^{b}$ 
  is a finite sequence of 
  \emph{input} variables.

\item 
  $Y$ = $Y^{r} \cup Y^{d}$ %\cup Y^{b}$ 
  is a finite sequence of
  \emph{auxiliary} variables. 
  Auxiliary variables are typically used to
  model \emph{modes} (e.g., from switching elements such as diodes) 
  %or \emph{uncontrollable inputs} (e.g., disturbances).
  or ``local'' variables.

\item 
%\marginpar{N is a guarded predicate?? if not la def di bounded DTLHS
%va modificata. Commento su generalità che c'era al CAV. ET}
  $N(X, U, Y, X')$ is a conjunctive predicate 
  over $X \cup U \cup Y \cup X'$ defining the 
  {\em transition relation} (\emph{next state}) of the system.
  $N$ is {\em deterministic} if $N(x, u, y_1, x')$ $\land$ $N(x, u, y_2, x'')$
  implies $x' = x''$, and {\em nondeterministic} otherwise.
\end{itemize}

A DTHS is {\em bounded} if the predicate $N$ is bounded.
A DTHS is {\em deterministic} if $N$ is deterministic.
A DTHS is {\em linear}, and we call it DTLHS if $N$ is a conjunction of linear constraints. 
\end{definition}

%\vspace*{-1mm}
%By Prop.~\ref{predconjunctive.prop}, 
Since any bounded guarded predicate can 
be transformed into a conjunctive predicate (see Sect.~\ref{subsection:predicates}), for the sake of readability we will 
use bounded guarded predicates to describe the transition relation of 
bounded DTHSs. To this aim, we will also clarify which variables are boolean,
and thus may be used as guards in guarded constraints.

%IGORTAGLIA
%Note that DTLHSs can effectively model linear algebraic constraints
%involving both continuous as well as discrete variables.
%%and are
%%exponentially more succinct than (discrete time) 
%%\emph{Linear Hybrid Automata} (LHA)
%%\cite{alg-hs-tcs95,HHT96,AHH96,phaver-sttt08}.
%%This entails that 
%Therefore many embedded
%control systems may be modeled as DTLHSs. 
%%

\begin{example}
\label{ex:dths}

\begin{figure}
  \centering
  \includegraphics[width=0.3\columnwidth]{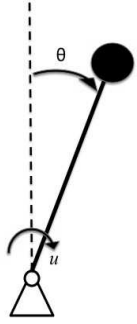}
\vspace*{-0.4cm}
  \caption{Inverted Pendulum with Stationary Pivot Point.}
  \label{fig:invpend}
  % \caption{Inverted Pendulum with Stationary Pivot Point and Massless Rod}
\vspace*{-0.6cm}
\end{figure}

Let us consider a simple inverted pendulum~\cite{KB94}, as shown in Fig.~\ref{fig:invpend}. 
The system is modeled by taking the angle $\theta$ and the angular velocity $\dot{\theta}$ as 
state variables. The input of the system is the torquing force $u$, that can influence 
the velocity in both directions. 
Moreover, the behaviour of the system depends on the pendulum mass $m$, the length of the 
pendulum $l$ and the gravitational acceleration $g$. Given such parameters, 
the motion of the system is described by the differential equation 
$\ddot{\theta} = {g \over l} \sin \theta + {1 \over m l^2} u$.
%\begin{equation}
%  \label{eq:pendmotion}
%\end{equation}

In order to obtain a state space representation, we consider the following normalized system,  
where $x_1$ is the angle $\theta$ and $x_2$ is the angular speed $\dot{\theta}$.

\begin{equation}
  \label{eq:pendmotion2}
\left\{\begin{array}{l}
    \dot{x}_1 = x_2 \\
    \dot{x}_2 = {g \over l} \sin x_1 + {1 \over m l^2} u
  \end{array}
\right.
\end{equation}

The DTHS model ${\cal H}$ for the pendulum is the tuple $(X,U,Y,N)$, 
where $X=\{x_1, x_2\}$ is the set of continuous state variables, 
$U=\{u\}$ is the set of input variables, and $Y=\varnothing$. 
Differently from~\cite{KB94}, we consider the problem of finding a discrete controller, whose decisions maybe ``apply the force clockwise'' ($u=1$), ``apply the force counterclockwise'' ($u=-1$)'', or ``do nothing'' ($u=0$). 
The intensity of the force will be given as a constant $F$. Finally, the discrete time 
transition relation $N$ is obtained from the equations in (\ref{eq:pendmotion2}) by introducing
a constant $T$ that models the sampling time. $N$ is the 
predicate $(x'_1 = x_1 + T x_2)\,\land\,(x'_2 = x_2 + T {g \over l} \sin x_1 + T {1 \over m l^2} F u)
$. 
\end{example}

The semantics of DTHSs is given in terms of LTSs. 

\begin{definition}
Let ${\cal H}$ = ($X$, $U$, $Y$, $N$) 
be a DTHS.
The dynamics of ${\cal H}$ 
is defined by the Labeled Transition System 
$\mbox{\rm LTS}({\cal H})$ = (${\cal D}_X$, ${\cal D}_U$,
$\tilde{N}$) where:
% $S$ = ${\cal D}_X$, $A$ = ${\cal D}_U$ and
$\tilde{N} : {\cal D}_X \; \times \; {\cal D}_U \; \times \; {\cal D}_X \rightarrow \B$ 
is a function s.t.  $\tilde{N}(x, u, x') \equiv \exists y \in {\cal D}_Y: N(x, u, y, x')$.
% Note that since $N$ is serial $\mbox{\rm LTS}({\cal H})$ is indeed
% an LTS as defined in Sect. \ref{lts.tex}.
A \emph{state} $x$ for ${\cal H}$ is a state $x$ for 
$\mbox{\rm LTS}({\cal H})$ and a \emph{run} 
(or \emph{path}) for ${\cal H}$ is
a run for $\mbox{\rm LTS}({\cal H})$ (Sect. \ref{lts.tex}).
%A \emph{DTLHS reachability problem instance} is a triple $({\cal H}, I, G)$ with $I, G
%\subseteq {\cal D}_X$. An %\marginpar{aggiunto per poter dire che non si puo' fare il check preciso sui self-loop}
%algorithm {\em decides} the DTLHS reachability problem when, for each instance $({\cal H}, I, G)$, it returns YES iff
%there exists a finite path $\pi$ for ${\cal H}$ s.t. $\pi^{(S)}(0)\in I$ and
%$\pi^{(S)}(|\pi|) \in G$.
\end{definition}

\subsection{DTHS Control Problem}
A DTHS 
control problem $({\cal H}, I, G)$ is defined as the LTS
control problem ($\mbox{\rm LTS}(\cal H)$, $I$, $G$).
%
%All definitions given above for LTSs
%% ${\cal S} = (S, A, T)$ 
%extend to DTLHSs
%${\cal S} = (S, A, T)$
%by replacing  ${\cal S}$
%with $\mbox{\rm LTS}(\cal H)$.
%For DTLHSs we restrict ourselves to control problems
%where $I$ and $G$ can be represented as
%linear predicates. %c'era conjunctive, ma non mi sembra il caso...
%and $S$ with ${\cal D}_{X}$.
%and regarding $I$, $G$ as subsets of ${\cal D}_{X}$
%
%The control problem for a 
% ${\cal H}$ 
%(unfortunately undecidable \cite{qks-techrep})
%is essentially the control problem 
%for the LTS induced by the dynamics of ${\cal H}$.
%The set of initial states $I$ and the goal region $G$ are given 
%Initial states $I$ and goal region $G$ are given 
%as convex predicates.
%When controlling a DTLHS in many cases
%\marginpar{Hints for Rew3-p3 added. et}
%
%<IS>eliminato: in questo paper consideriamo strong e weak sullo stesso piano
%
%For DTLHS control problems usually \emph{robust} controllers are desired.
%That is, controllers that, notwithstanding nondeterminism in the plant
%(e.g. due to parameter variations), drive the plant state to the goal
%region. For this reason, and to counteract the nondeterminism stemming
%from the quantization process, we focus on strong solutions.
%Furthermore, 
To accommodate quantization errors,
always present in software based controllers, it is useful
to relax the notion of control solution by tolerating
an (arbitrarily small) error $\varepsilon$ on the continuous variables.
This leads to the definition of $\varepsilon$-solution.
% to a DTLHS control problem.% in the following.
%
%When defining a control problem for a DTLHS we must keep 
%into account that a 
% state 
%When controlling a DTLHS in many cases a small error on the continuous variables
%is tolerable. This leads to the definition of $\varepsilon$-solution to a (DTLHS) 
%control problem.
%
%\begin{definition}
Let $\varepsilon$ be a nonnegative real number, 
$W\subseteq \R^n\times \Z^m$. 
The $\varepsilon$-\emph{relaxation} of $W$ is the set
(\emph{ball} of radius $\varepsilon$)
${\cal B}_{\varepsilon}(W)$
= \{($z_1, \ldots z_n$, $q_1, \ldots q_m$) 
$|$ $\exists (x_1, \ldots, x_n, q_1, \ldots q_m) \;
\in \; W$ and $\forall i \in [n] \; 
|z_i - x_i| \leq \varepsilon$\}.

%Let $\varepsilon$ be a nonnegative real number, 
%$W^{r}$ = $\prod_{i=1}^{n}W_{i}^{r}$ $\subseteq$ ${\cal D}_X^{r}$,
%$W^{d}$ = $\prod_{i=1}^{m}W_{i}^{d}$ $\subseteq$ ${\cal D}_X^{d}$
%and $W$ = $W^{r}$ $\times$ $W^{d}$ $\subseteq$ ${\cal D}_X^{r} \times {\cal D}_X^{d}$.
%The $\varepsilon$-\emph{relaxation} of 
%$W$ is the set (\emph{ball} of radius $\varepsilon$)
%%${\cal B}(W, \varepsilon)$ =
%%defined as follows:
%%
%${\cal B}_{\varepsilon}(W)$%, \varepsilon)$ =
%= \{($z_1, \ldots z_n$, $q_1, \ldots q_m$) $|$ 
%$(q_1, \ldots q_m) \; \in \; W^{d}$ and 
%$\forall i \in \{1, \ldots n\} \; \exists \; x_i \in W_i^{r}$ 
%s.t. 
%$|z_i - x_i| \leq \varepsilon$\}.
%\end{definition}

%\vspace*{-1mm}
\begin{definition}%[DTLHS Control Problem]
  \label{def:dtlhs-ctr-prb}
%  Let ${\cal H} = (X, U, Y, N)$ be a DTLHS. 
%  A \emph{DTLHS control problem} for ${\cal H}$ is a tuple 
  Let $({\cal H}, I, G)$ be a DTHS control problem and $\varepsilon$
be a nonnegative real number.
%where
%  $I$ and $G$ are convex predicates  on $X$.
  % $I \; \subseteq \; {\cal D}_X$ 
  % and
  % $G \; \subseteq \; {\cal D}_X$. 
%  Given a nonnegative real number $\varepsilon$, 
%  solution to the DTLHS \emph{control problem} $({\cal H}, I, G)$
%  is a solution to the LTS control problem 
%  $(\mbox{\rm LTS}({\cal H}), I, G)$
%\end{definition}
%\begin{definition}%[$\varepsilon$-solution to a DTLHS Control Problem]
%  \label{def:dtlhs-eps-sol}
%  Let 
  % (${\cal H}$, $I$, $G$) be a DTLHS control problem and 
%  $\varepsilon$ be a nonnegative real number.
%  An 
%\marginpar{Ad un reviewer non \`e chiaro a caso servono le $\varepsilon$ soluzioni. 
%Direi che lo abbiamo spiegato sopra. et.}
An $\varepsilon$ solution to $({\cal H}, I, G)$ is a 
solution to the LTS control problem 
$(\mathrm{LTS}({\cal H}), I, {\cal B}_{\varepsilon}(G))$. %, \varepsilon))$.
  %(see Sections~\ref{lts.tex} and~\ref{dths.tex}).
\end{definition}

%\vspace*{-4mm}
\begin{example}
\label{ex:ctr-dths}
Let T be a positive constant (sampling time).
We define the DTHS ${\cal H}$ $=$ $(\{x\},\{u\},$ $\varnothing$, $N)$ where 
$x$ is a continuous variable, $u$ is a boolean variable, and
$N(x, u, x')$ $\equiv$
$[\overline{u} \rightarrow x' = x+(\frac{5}{4}-x)T] 
\land
[u \rightarrow x' = x+ (x - \frac{3}{2})T]$.
Let ${\cal P}$ = (${\cal H}$, $I$, $G$) be a control problem, where 
$I\equiv -2\leq x\leq 2.5$, and $G\equiv x=0$.
A controller may drive the system near enough to the goal $x=0$, 
by enabling a suitable action in such 
a way that $x'<x$ when $x>0$ and $x'>x$ when $x<0$.
%$x'>x$ if $x\leq 0$ and $x'<x$ if $x\geq 0$. 
If the sampling time $T$ is small enough with respect to $\varepsilon$
(for example $T<\frac{\varepsilon}{10})$, 
the controller:
$
K(x,u)=(-2\leq x\leq 0\;\land\; \overline{u}) \; \lor \; 
(0\leq x\leq \frac{11}{8} \;\land\; u) \; \lor \;
(\frac{11}{8}\leq x\leq 2.5 \; \land\; \overline{u})
$ 
is an $\varepsilon$ solution to $({\cal H}, I, G)$.
Observe that, that any controller $K'$ such that $K'(\frac{5}{4},0)$ holds is not 
a solution, because since $N(\frac{5}{4},0,\frac{5}{4})$ holds, 
the closed loop system ${\cal H}^{(K)}$ may loop forever along 
the path $\frac{5}{4},0,\frac{5}{4},0\ldots$.
\end{example}

\begin{example}
\label{ex:pendulum-goal}
The typical goal for the inverted pendulum in Example~\ref{ex:dths} is to turn the pendulum steady 
to the upright position, starting from any possible initial position, within a given speed interval. In our experiments, the goal region is defined by the predicate $G(X)\equiv (-\rho\leq x_1\leq\rho)\,\land\,(-\rho\leq x_2\leq\rho)$, where $\rho\in\{0.05, 0.1\}$, and the initial 
region is defined by the predicate $I(X)\equiv(-\pi\leq x_1\leq \pi)\,\land\,(-4\leq x_2\leq 4$). 
\end{example}

%Observe that the feedback controller for a DTLHS 
%will only measure present state variables 
%(e.g., capacitor voltage and inductor current in Ex.~\ref{example-buck.tex}) 
%and will not measure auxiliary variables  (e.g. diode state in Ex.~\ref{example-buck.tex}).

  %%\vspace*{-2mm}
%In order to define
%quantized feedback control problems for DTLHSs (Def. \ref{def:qfc}) we introduce
%\emph{quantizations} (Def. \ref{def:quantization}).
%  Let $x$ be a real valued variable ranging on a bounded interval of reals
%  ${\cal D}_x$ = $[a_x, b_x]$. A \emph{quantization} for $x$ is a
%  function $\gamma$ from ${\cal D}_x$ to a bounded interval of
%  integers $\gamma({\cal D}_x)$ = $[\hat{a_x}, \hat{b_x}]$.
%  %
%  For ease of notation we extend quantizations to integer variables
%  ranging on a bounded interval of integers by stipulating that the
%  only quantization $\gamma$ for such variables is the identity
%  function (i.e. $\gamma(x) = x$).
%  The \emph{width} $\left\|\gamma^{-1}(v)\right\|$ of $v \in
%  \gamma({\cal D}_x)$ in $\gamma$
%  is defined as follows:
%  $
%  \left\|\gamma^{-1}(v)\right\| = \sup\left\{\; |w - z| \; | \; w, z \in {\cal D}_x \; \wedge \; \gamma(w) = \gamma(z) = v\right\}
%$.
%  The \emph{quantization step} $\|\gamma\|$ is defined as follows:
%  $
%  \left\|\gamma\right\| = \max \left\{ \; \left\|\gamma^{-1}(v)\right\| \; | \; v \in \gamma({\cal D}_x)\right\}
%  $.

%\vspace*{-0.6cm}

\subsection{Quantized Control Problem}
\label{sec:qcp}
In order to manage real variables, in classical control theory the
concept of {\em quantization} is introduced (e.g., see
\cite{quantized-ctr-tac05}). Quantization is the process of
approximating a continuous interval by a set of integer values. 
%A
%quantized feedback control system uses two converters to translate
%continuous variables into discrete variables (AD converter) and vice
%versa (DA converter).
%
In the following we formally define a quantized feedback control
problem for DTHSs.

%\begin{definition}
A {\em quantization function} $\gamma$
for a real interval $I=[a,b]$ is a non-decreasing   
function $\gamma:I\mapsto \Z$ s.t. $\gamma(I)$ is a bounded integer interval.
We will denote $\gamma(I)$ as $\hat{I}=[\gamma(a),\gamma(b)]$.
The \emph{quantization step} of $\gamma$, notation $\|\gamma\|$, 
is defined as ${\rm sup}\{ \; |w-z|
  \; | \; w, z \in I \land \gamma(w)=\gamma(z)\}$. 
%  The quantization function $\gamma$ is {\em dense} in $I$
%if $\inf\{ \; |w-z|
%  \; | \; w,z \in I\}=0$.   
%\end{definition}
For ease of notation, we extend quantizations to integer intervals,
by stipulating that in such a case the quantization function
is the identity function.% (i.e. $\gamma(x) = x$).

\begin{definition}%[Quantization for DTLHS]
  Let ${\cal H} = (X, U, Y, N)$ be a DTHS, and 
  let $W=X\cup U\cup Y$.   
  A \emph{quantization} ${\cal Q}$ for $\cal H$
  is a pair $(A, \Gamma)$, where:
  \begin{itemize}
  \item

  $A$ is a %bounded rectangular 
  predicate over $W$ %=X$, $U$, and $Y$ 
  that explicitely bounds each variable in $W$. %$X$, $U$, and $Y$.
  For each $w\in W$, we denote  
  with $A_w$ its {\em admissible region} and with $A_W$ $=$ $\prod_{w\in W} A_w$.
%  We denote with $A_W$ the {\em admissible region} 
%  defined by $A$ as the set of valuations that satisfy $A$.   
  
  \item 
  $\Gamma$ is a set of maps $\Gamma = \{\gamma_w$ $|$
  $w \in W$ and $\gamma_w$ is a 
  quantization function for $A_w\}$. 
  \end{itemize}
  Let $W = [w_1, \ldots w_k]$ 
  and $v = [v_1, \ldots v_k] \in A_{W}$. 
  We write $\Gamma(v)$ for the tuple $[\gamma_{w_1}(v_1),$ $ 
  \ldots,$ $\gamma_{w_k}(v_k)]$. 
  Finally, the \emph{quantization step} $\|\Gamma\|$ is %for $\Gamma$ is
  defined as ${\rm sup} \{ \; \|\gamma\| \; | \; \gamma \in
  \Gamma \}$.
\end{definition}
%Note that $\Gamma$ is univocally defined by its quantizations
%$\gamma_w$ where $w$ is a real valued variable since discrete
%variables are not affected by the quantization process (i.e. their
%quantization is always the identity function).
A control problem admits a \emph{quantized} solution if control
decisions can be made by just looking at quantized values. This enables
a software implementation for a controller.

%\vspace*{-0.1cm}

%\vspace*{-1mm}
\begin{definition}%[QFC $\varepsilon$-solution to a Control Problem]
  \label{def:qfc}
  Let ${\cal H} = (X, U, Y, N)$ be a DTHS,
  ${\cal Q}=(A,\Gamma)$ be a quantization for ${\cal H}$
  and ${\cal P} = ({\cal H}, I, G)$ be a DTHS control problem.
  A ${\cal Q}$ \emph{Quantized Feedback Control} (QFC) %strong (resp. weak) 
  solution to ${\cal P}$ is a
  $\|\Gamma\|$ %strong (resp. weak) 
  solution $K(x, u)$ to ${\cal P}$ such that
  $
  K(x, u) = \hat{K}(\Gamma(x), \Gamma(u))
  $
  where 
  % $\hat{K}$ is a function from
  $\hat{K} : \Gamma(A_{X}) \times \Gamma(A_{U})$ $\rightarrow$ $\B$.
  % s.t.
  % $K(x, u)$ = $\hat{K}(\Gamma(x), \Gamma(u))$ is an $\varepsilon$-solution to ${\cal P}$.
  % ($\cal H$, $\Gamma^{-1}(\Gamma(I))$, $\Gamma^{-1}(\Gamma(G))$).
  % for $\cal H$ is a tuple ($\Gamma$, $\cal H$, $I$, $G$) where:
  % $\Gamma$ is a quantization for $\cal H$, 
  % $I \; \subseteq \; {\cal D}_X$ and
  % $G \; \subseteq \; {\cal D}_X$.
  % A solution to the QFC problem 
  % ($\Gamma$, $\cal H$, $I$, $G$) is a function
  % $\hat{K}$ : $\Gamma({\cal D}_{X}) \times \Gamma({\cal D}_{U})$ $\rightarrow$ $\B$
  % s.t.
  % $K(x, u)$ = $\hat{K}(\Gamma(x), \Gamma(u))$ is a solution
  % to the control problem ($\cal H$, $\Gamma^{-1}(\Gamma(I))$, $\Gamma^{-1}(\Gamma(G))$).
\end{definition}

%\input{examples.DTHS.figs.tex}

%\vspace*{-4mm}
%\begin{example}
%\label{ex:gamma-qfc-sol}
%%IM<b>
%%Let ${\cal H}=(\{x\},\varnothing,\{u\},N)$, where $D_x=[-3,3], D_u=\{0,1\}$, and $N$ defined by $N(x,\_,0,x/2)\lor N(x,\_,0,3x/2)$.
%%Let $I=D_x$ the initial region, and $G=\{0\}$.
%Let ${\cal P}$ be as in Ex. \ref{ex:ctr-dths},
%$\Gamma(x)=round(x/2)$ 
%(where $round(x) = \lfloor x \rfloor + \lfloor 2(x - \lfloor
%x\rfloor)\rfloor$ is the usual rounding function)
%and
%$\hat{K}$ as in
%Ex. \ref{example-mgo.tex}.
%Then, $\|\Gamma\|$ = 2 and
%$K(x, u)$ = $\hat{K}(\Gamma(x),\Gamma(u))$ is a $\Gamma$ QFC solution to ${\cal P}$.
%\end{example}

\begin{example}
\label{ex:q-ctr}
Let ${\cal P}$ be as in Example \ref{ex:ctr-dths}.
Let us consider the quantization $(A, \Gamma)$ where $A=I$
% is the predicate $-2\leq x\leq 2$ 
and $\Gamma$ = $\{\gamma_x\}$ where $\gamma_x(x)=\lfloor x\rfloor$.
The set $\Gamma(A_x)$ of quantized states is the integer interval $[-2,2]$. 
%Let $\hat{K}(\hat{x},\hat{u})=(\hat{u}=0\;\land\;\hat{x}\neq 0)\;\lor(\hat{u}=1\;\land\;\hat{x}\in
%\{0,1\})$.
%The controller $K''(x,u)=\hat{K}(\Gamma(x),\Gamma(u))$ is exactly $K$, and therefore it is 
%a QFC weak solution to ${\cal P}$.
%On the other hand, 
No ${\cal Q}$ QFC solution can exist, because defining 
both $\hat{K}(1,1)$ and $\hat{K}(1,0)$ allows infinite loops to be potentially executed in 
the closed loop system. Of course, the controller $K$ in Example \ref{ex:ctr-dths} can be 
obtained as a quantized controller decreasing the quantization step, for example by taking 
$\tilde{\Gamma}$ = $\{\tilde{\gamma}_x\}$ where $\tilde{\gamma}_x(x)=\lfloor 8x\rfloor$. 
\end{example}

%  \input{problem.tex}%General Formulation
 
  %\newpage
\section{DTLHS overapproximation of DTHSs}
In~\cite{qks-cav10}, we presented the tool QKS that given a DTLHS control problem 
${\cal P}=({\cal H}, I, G)$ and a quantization schema as input, yields as output control software 
implementing a most general optimal quantized controller for ${\cal P}$, 
whenever a sufficient condition holds.
%By solving MILP problems, such algorithm computes a suitable finite 
%discrete abstraction ${\cal C}$ of ${\cal H}$ called {\em control abstraction}. 
In this section we show how a DTHS ${\cal H}$ 
can be overapproximate by a DTLHS ${\cal L}_{\cal H}$, in such a way that 
$\mbox{LTS}({\cal H})\sqsubseteq \mbox{LTS}({\cal L}_{\cal H})$. 
The following theorem ensures that controllers for ${\cal L}_{\cal H}$ are 
guaranteed to be controllers for ${\cal H}$.

\subsection{DTHS linearization}
Let $C(V)$, with $V \subseteq X \cup U \cup Y \cup X'$, be a constraint in $N$ that contains a nonlinear function as 
a subterm. Then $C(V)$ has the shape $f(R,W)+E(V)\leq b$, 
where $R\subseteq V^r$ is a set of $n$ real variables $\{r_1,\ldots, r_n\}$, 
and $W\subseteq V^d$ is a set of discrete variables.
For each $w\in {\cal D}_W$, we define the function $f_w(R)$ 
obtained from $f$, by instanciating discrete variables with $w$, i.e
$f_w(R)=f(R,w)$.
Then $C(V)$ is equivalent to the conjunctive predicate 
$\bigwedge_{w\in {\cal D}_W} [f_w(R)+E(V)\leq b]$.
%Since each $f_w(R)$ is a ${\cal C}^2$ function of $n$ real variables they can be  
%linearly approximated by a Taylor polynomial of degree 1.
In order to make the overapproximation tighter, we partition the domain ${\cal D}_R$ 
of each function $f_w(R)$ into 
$m$ hyperintervals $I_1, I_2 \ldots I_m$, where 
$I_i=\Pi_{j\in [n]} [a_{j}^{i}, b_{j}^{i}]$. 
In the following $R\in I_i$ will denote the conjunctive predicate 
$\bigwedge_{j\in [n]} a_{j}^{i}\leq r_j\leq b_{j}^{i}$.

%By suitably chosing a point $r_0^i$ in each interval $I_i$, 
%and computing values $m_i$ and $M_i$ such that  
%$m_i\leq\frac{1}{2}(R - r^i_0)^TH(R + t(R - r^i_0)) (R - r^i_0)\leq M_i$, 

Let $f_{w,i}^+(R)$ and $f_{w,i}^-(R)$ be over- and under- linear approximations
of $f_w(R)$ over the hyperinterval $I_i$, i.e. such that $R\in I_i$ implies $f_{w,i}^-(R)\leq
f_w(R)\leq f_{w,i}^+(R)$.
Taking $|{\cal D}_W|\times n$ fresh continuous variables
$Y=\{y_{w,i}\}_{w\in{{\cal D}_W}, i\in[n]}$, we define the conjunctive predicate
$\tilde{C}(V,Y)$:
\[
\begin{array}{l}
\bigwedge_{w\in {\cal D}_W} \bigwedge_{i\in[m]} [y_{w,i}+E(V) \leq b]\\
\hspace*{.3cm} \land \bigwedge_{w\in {\cal D}_W} [\bigvee_{i\in[m]} [R\in I_i \land f_{w,i}^-(R)\leq y_{w,i}\leq f_{w,i}^+(R)]]
\end{array}
\]

By introducing $|{\cal D}_W| \times n$ fresh boolean variables $Z=\{z_i\}_{w\in{{\cal D}_W}, i\in[n]}$, $\tilde{C}(V,Y)$ can be translated into the following equisatisfiable conjunctive predicate
$\bar{C}(V,Y,Z)$:
\[
\begin{array}{l}
\bigwedge_{w\in {\cal D}_W} \bigwedge_{i\in[m]} [y_{w,i}+E(V) \leq b]\\
\hspace*{.3cm} \land \bigwedge_{w\in {\cal D}_W} \bigwedge_{i\in[m]} z_{w,i}\rightarrow f_{w,i}^-(R)\leq y_{w,i}\leq f_{w,i}^+(R)\\
\land \bigwedge_{w\in {\cal D}_W} \bigwedge_{i\in[m]} z_{w,i}\rightarrow R\in I_i 
\land \bigwedge_{w\in {\cal D}_W} \sum_{i\in[m]} z_{w,i} \geq 1
\end{array}
\]
%\[
%\begin{array}{l}
%\bigwedge_{w\in {\cal D}_W} \bigwedge_{i\in[m]} [y_{w,i}+E(X) \leq b]\\
%\hspace*{.3cm} \land \bigwedge_{w\in {\cal D}_W} \bigwedge_{i\in[m]} f_{w,i}^-(R)\leq y_{w,i}\leq f_{w,i}^+(R)\\
%\hspace*{.3cm}\land \bigwedge_{i\in[m]} z_i\rightarrow R\in I_i
%\land \bigwedge_{i\in[m]} \bar{z}_i\rightarrow R\not\in I_i
%\end{array}
%\]

%At the price of introducing new auxiliary variables, 
As a result, this transformation eliminates a nonlinear subexpression of a constraint $C(V)$ 
and yields a constraint $\bar{C}(V,Y,Z)$ such that $\exists Y,Z [\bar{C}(V,Y,Z)\Rightarrow C(V)]$. 
Given a DTHS ${\cal H}=(X, U, Y, N)$, without loss of generality, we may suppose 
that the transition relation $N$ is a conjunction $\bigwedge_{i\in[m]} C_i(X,U,Y,X')$ 
of constraints. 
By applying the above transformation to each nonlinear subexpressions occurring in $N$,  
we obtain a conjunction of linear constraints $\bar{N}\equiv\bigwedge_{i\in[\bar{m}]} \bar{C}_i(X,U,\bar{Y},X')$, such that $\bar{N}\Rightarrow N$. Hence, starting from a DTHS ${\cal H}$, we find 
a DTLHS ${\cal L}_{\cal H}=(X, U, \bar{Y}, \bar{N})$, whose dynamics overapproximate the dynamics of ${\cal H}$.

\begin{theorem}
Let ${\cal H}=(X, U, Y, N)$ be a DTHS and let ${\cal L}_{\cal H}$ be its linearization.
Then we have that  
$\mbox{LTS}({\cal H})\sqsubseteq \mbox{LTS}({\cal L}_{\cal H})$.
\end{theorem}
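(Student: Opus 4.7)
The plan is to unfold the refinement relation $\sqsubseteq$ and reduce the claim to a per-constraint witness construction. By the definition of refinement on LTSs, it suffices to show that $\tilde N(x,u,x') \Rightarrow \tilde{\bar N}(x,u,x')$ for all $x, x' \in {\cal D}_X$ and $u \in {\cal D}_U$; that is, whenever some $y \in {\cal D}_Y$ satisfies $N(x,u,y,x')$, some $\bar y \in {\cal D}_{\bar Y}$ satisfies $\bar N(x,u,\bar y,x')$. Fix such a valuation $V^\ast$ on $V = X \cup U \cup Y \cup X'$ with $N(V^\ast)$ true. Because $N = \bigwedge_j C_j$, $\bar N = \bigwedge_j \bar C_j$, and the fresh variables $\{y_{w,i}\}$, $\{z_{w,i}\}$ introduced for distinct $C_j$'s are pairwise disjoint, it suffices to treat a single nonlinear constraint $C(V) \equiv f(R,W) + E(V) \le b$ with linearization $\bar C(V, Y, Z)$ and extend $V^\ast$ to a satisfying assignment of $\bar C$.

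The witness is the natural one. Pick $i^\ast \in [m]$ with $R^\ast \in I_{i^\ast}$, possible because the hyperintervals partition ${\cal D}_R$. For every $w \in {\cal D}_W$, set $z_{w,i^\ast} = 1$, $z_{w,j} = 0$ for $j \ne i^\ast$, $y_{w,i^\ast} = f_w(R^\ast)$, and $y_{w,j}$ for $j \ne i^\ast$ to any value in its declared domain satisfying $y_{w,j} \le b - E(V^\ast)$. Then all four conjuncts of $\bar C$ hold by inspection: the unconditional bound $y_{w,i} + E(V^\ast) \le b$ holds at $i = i^\ast$ because $C$ gives $f_w(R^\ast) + E(V^\ast) \le b$, and at $i \ne i^\ast$ by construction of $y_{w,j}$; the guarded bracketing $z_{w,i^\ast} \to f_{w,i^\ast}^-(R^\ast) \le y_{w,i^\ast} \le f_{w,i^\ast}^+(R^\ast)$ holds because $R^\ast \in I_{i^\ast}$ forces the approximation inequalities to sandwich the actual value $f_w(R^\ast)$; the guard $z_{w,i^\ast} \to R \in I_{i^\ast}$ holds by the choice of $i^\ast$; and $\sum_i z_{w,i} \ge 1$ holds since $z_{w,i^\ast} = 1$. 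All remaining $\bar C_j$ are handled by the same construction on their own disjoint fresh variables.

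The step I expect to be the main obstacle, if any, is the treatment of the \emph{inactive} fresh variables $y_{w,j}$ for $j \ne i^\ast$: they carry the unconditional linear inequality $y_{w,j} + E(V) \le b$ without any bracketing constraint, so the proof needs to know that the declared domain ${\cal D}_{y_{w,j}}$ admits a value at most $b - E(V^\ast)$ for every reachable $V^\ast$. This is a mild boundedness hypothesis that must be respected when declaring the DTLHS ${\cal L}_{\cal H}$ (e.g., take the domain unbounded below, or large enough relative to the admissible region $A$), but it is not a conceptual obstacle. A secondary bookkeeping point is the equivalent rewriting $C \equiv \bigwedge_{w \in {\cal D}_W}[f_w(R) + E(V) \le b]$ used in the construction: the witness above supplies the required extension independently for each $w$, so no additional difficulty arises.
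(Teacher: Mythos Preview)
The paper states this theorem without proof, treating it as an immediate consequence of the linearization construction described just before it. In fact the surrounding prose contains what looks like a slip: it asserts that the transformation ``yields a constraint $\bar{C}(V,Y,Z)$ such that $\exists Y,Z\,[\bar{C}(V,Y,Z)\Rightarrow C(V)]$'' and that ``$\bar{N}\Rightarrow N$'', which would give the \emph{reverse} refinement $\mbox{LTS}({\cal L}_{\cal H})\sqsubseteq\mbox{LTS}({\cal H})$. For the direction actually claimed in the theorem one needs precisely what you prove: from any valuation $V^\ast$ satisfying $C$, one can extend to the fresh variables $Y,Z$ so that $\bar{C}$ holds. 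Your witness construction (choose $i^\ast$ with $R^\ast\in I_{i^\ast}$, set $z_{w,i^\ast}=1$, $y_{w,i^\ast}=f_w(R^\ast)$, and pick the remaining $y_{w,j}$ small enough) is the natural one and is correct.

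The two caveats you flag are real. The inactive auxiliaries $y_{w,j}$ for $j\neq i^\ast$ must indeed admit a value at most $b-E(V^\ast)$; this is a harmless domain convention but has to be said. The ``bookkeeping'' point about the paper's claimed equivalence $C(V)\equiv\bigwedge_{w\in{\cal D}_W}[f_w(R)+E(V)\leq b]$ is more than cosmetic: as written it is not an equivalence (for a valuation with $W=W^\ast$ only the $w=W^\ast$ conjunct is forced), so your line ``$C$ gives $f_w(R^\ast)+E(V^\ast)\leq b$'' is strictly justified only for $w=W^\ast$. The intended construction presumably guards each $w$-branch by $W=w$; under that reading your witness goes through verbatim. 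In short, your proposal supplies the argument the paper omits, and with the correct orientation.
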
 

\begin{theorem}
\label{thm:ctr-refinement}
Let ${\cal S}_1=(S, A, T_1)$ and ${\cal S}_2=(S, A, T_2)$ be two LTSs, 
and let $K$ be a solution for the LTS control problem
$({\cal S}_2, I, G)$.
If ${\cal S}_1$ refines ${\cal S}_2$ and for all $s\in S$ $\mbox{Adm}({\cal S}_1,s)=\mbox{Adm}({\cal S}_1,s)$, then $K$ is a solution also for $({\cal S}_1, I, G)$. 
\end{theorem}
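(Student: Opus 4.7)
The plan is to unfold the definition of a solution to an LTS control problem (Def.~\ref{def:sol}) and verify each of its three requirements for $K$ on ${\cal S}_1$, leveraging the refinement ${\cal S}_1 \sqsubseteq {\cal S}_2$ and the admissibility hypothesis. I read the statement as having a typo: the condition must be $\mbox{Adm}({\cal S}_1,s)=\mbox{Adm}({\cal S}_2,s)$.

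First I would check that $K$ is a controller for ${\cal S}_1$: whenever $K(s,a)$ holds, since $K$ is a controller for ${\cal S}_2$ we have $a \in \mbox{Adm}({\cal S}_2,s)$, and by hypothesis $a \in \mbox{Adm}({\cal S}_1,s)$, so some $s'$ with $T_1(s,a,s')$ exists. Next, I would observe that $\mbox{dom}(K)$ depends only on $K$ and not on the underlying LTS, so the containment $I \subseteq \mbox{dom}(K)$ is inherited for free.

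The bulk of the work is showing $J_{\rm strong}({\cal S}_1^{(K)},G,s)$ is finite for every $s \in \mbox{dom}(K)$. The key lemma I would prove is that the sets of fullpaths starting with $(s,a)$ coincide in ${\cal S}_1^{(K)}$ and ${\cal S}_2^{(K)}$ (in fact the first is contained in the second, which suffices). For this, a path $\pi$ in ${\cal S}_1^{(K)}$ is a path in ${\cal S}_2^{(K)}$ by refinement (applied stepwise: $T_1(s_t,a_t,s_{t+1}) \wedge K(s_t,a_t)$ implies $T_2(s_t,a_t,s_{t+1}) \wedge K(s_t,a_t)$). Moreover, combining refinement with the admissibility hypothesis yields $\mbox{Adm}({\cal S}_1^{(K)},s) = \mbox{Adm}({\cal S}_2^{(K)},s)$ for every $s$, so the ``no successor'' condition characterising the end of a finite fullpath is the same in both systems; infinite paths are fullpaths in both trivially. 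Once this is established, for any $s \in \mbox{dom}(K)$ and any $a \in \mbox{Adm}({\cal S}_1^{(K)},s)$ we get $J_s({\cal S}_1^{(K)},G,s,a) \leq J_s({\cal S}_2^{(K)},G,s,a)$, and taking the sup over admissible actions (the same in both systems) gives $J_{\rm strong}({\cal S}_1^{(K)},G,s) \leq J_{\rm strong}({\cal S}_2^{(K)},G,s) < \infty$.

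The only subtle step is the equivalence of fullpath status: without the admissibility hypothesis, a finite path could end at a dead end in ${\cal S}_1^{(K)}$ while still having ${\cal S}_2^{(K)}$-successors, which would mean the ${\cal S}_1^{(K)}$-fullpath stops short of reaching $G$ and we could not bound its $J$-value using the ${\cal S}_2$-solution. The hypothesis $\mbox{Adm}({\cal S}_1,\cdot)=\mbox{Adm}({\cal S}_2,\cdot)$ is exactly what rules out this pathology, so it will appear precisely at this point in the argument; everything else is a bookkeeping unfolding of definitions.
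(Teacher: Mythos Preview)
Your proof is correct and complete; in fact you are more careful than the paper in explicitly verifying that $K$ is a controller for ${\cal S}_1$ and that $I\subseteq\mbox{dom}(K)$.

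The paper takes a different route for the finiteness of $J_{\rm strong}$. Rather than arguing via containment of fullpaths, it proceeds by induction on $n=J_{\rm strong}({\cal S}_2^{(K)},G,s)$, showing that the level sets $\{s\mid J_{\rm strong}({\cal S}_i^{(K)},G,s)\le n\}$ coincide for $i=1,2$: for $n=1$ one uses $\mbox{Img}({\cal S}_1,s,a)\subseteq\mbox{Img}({\cal S}_2,s,a)\subseteq G$ from refinement, together with the admissibility hypothesis to guarantee $\mbox{Img}({\cal S}_1,s,a)\neq\varnothing$; the inductive step replaces $G$ by the $(n{-}1)$-level set. Your argument is more direct and yields the quantitative inequality $J_{\rm strong}({\cal S}_1^{(K)},G,s)\le J_{\rm strong}({\cal S}_2^{(K)},G,s)$ in one stroke, while the paper's level-set induction is closer in spirit to the backward fixpoint computation actually used in synthesis. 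Both hinge on the admissibility hypothesis at the same conceptual point: in your proof it prevents spurious dead ends (so that finite ${\cal S}_1^{(K)}$-fullpaths remain ${\cal S}_2^{(K)}$-fullpaths); in the paper's proof it ensures that each level set, once reached in ${\cal S}_2$, is also reachable in one ${\cal S}_1$-step.
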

\begin{proof}
(Sketch) 
The proof is by induction on $n=J_{\rm strong}({\cal S}^{(K)}_2,G,s)$. 
If $n=1$ and $K(s,a)$, then ${\rm Img}({\cal S}_2,s,a)\subseteq G$. 
Since ${\cal S}_1\sqsubseteq {\cal S}_2$, we also have that 
${\rm Img}({\cal S}_1,s,a)\subseteq {\rm Img}({\cal S}_2,s,a)\subseteq G$. 
Moreover, $\mbox{Adm}({\cal S}_1,s)=\mbox{Adm}({\cal S}_2,s)$ implies that 
there exists at least a transition of the shape $T_1(s,a,s')$ with $s'\in G$ and thus 
$J_{\rm strong}({\cal S}^{(K)}_1,G,s)=1$ too. 
This implies that $\{s~|~J_{\rm strong}({\cal S}^{(K)}_1,G,s)=1\}=
\{s~|~J_{\rm strong}({\cal S}^{(K)}_2,G,s)=1\}$.
The inductive step is similar, by substituting $G$ with the set of states 
$\{s~|~J_{\rm strong}({\cal S}_2,G,s)=n-1\}$. 
\end{proof}
%We observe that the condition $\mbox{Adm}({\cal S}_1,s)=\mbox{Adm}({\cal S}_2,s)$ is not 
%very restrictive, since it holds, for example, whenever the transition relation $T_1$ is total, 
%i.e. if for all $s$ and $a$ there exists $s'$ such that $T_1(s,a,s')$. 

\begin{corollary}
\label{cor:real-result}
Let ${\cal H}=(X, U, Y, N)$ be a DTHS and let ${\cal L}_{\cal H}$ be its linearization.
Let $K$ be a solution for the DTLHS control problem
$({\cal L}_{\cal H}, I, G)$. Then $K$ is a solution also for the DTHS control problem $({\cal H}, I, G)$. 
\end{corollary}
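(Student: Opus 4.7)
The plan is to reduce the corollary to a direct application of Theorem~\ref{thm:ctr-refinement} (the refinement-preservation theorem for LTS control problems), using the preceding theorem that asserts $\mbox{LTS}({\cal H})\sqsubseteq \mbox{LTS}({\cal L}_{\cal H})$ as the key structural input. Concretely, set ${\cal S}_2=\mbox{LTS}({\cal L}_{\cal H})$ and ${\cal S}_1=\mbox{LTS}({\cal H})$ — note that by Definition~\ref{dths.def} and the construction of ${\cal L}_{\cal H}$, both LTSs share the same state space ${\cal D}_X$ and action space ${\cal D}_U$, so they are comparable under $\sqsubseteq$. Since by hypothesis $K$ is a solution to the DTLHS control problem $({\cal L}_{\cal H}, I, G)$, which by definition is the LTS control problem $(\mbox{LTS}({\cal L}_{\cal H}), I, G)$, Theorem~\ref{thm:ctr-refinement} will immediately give that $K$ solves $(\mbox{LTS}({\cal H}), I, G)$, which is exactly the DTHS control problem $({\cal H}, I, G)$.

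The one genuine step that needs argument is the side condition of Theorem~\ref{thm:ctr-refinement}: the equality of admissible-action sets $\mbox{Adm}(\mbox{LTS}({\cal H}), x) = \mbox{Adm}(\mbox{LTS}({\cal L}_{\cal H}), x)$ for every state $x$. Inclusion from left to right is immediate from the refinement: any action admissible in ${\cal H}$ has a witnessing successor, which by $\sqsubseteq$ is also a successor in ${\cal L}_{\cal H}$. The reverse inclusion is where I would spend most effort. I would argue from the form of the linearization: each nonlinear subterm $f_w(R)$ is replaced by a fresh variable $y_{w,i}$ constrained to the tube $[f^-_{w,i}(R), f^+_{w,i}(R)]$ that contains $f_w(R)$ itself. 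Hence for every admissible state $x$ and input $u$ in the original $N$, one can instantiate the auxiliary variables $\bar{Y}$ in $\bar{N}$ by the exact nonlinear values, and conversely no spurious inputs become admissible purely through the extra slack in $\bar Y$ — because for any $\bar{y}$ satisfying $\bar N$ one obtains a legitimate next state for ${\cal L}_{\cal H}$, and totality of the plant dynamics (which holds on the admissible region for the pendulum and, more generally, whenever $N$ is total in $U$) ensures a corresponding next state also exists in ${\cal H}$.

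Having established this admissibility equality, the conclusion is mechanical: Theorem~\ref{thm:ctr-refinement} hands us that $K$ is a solution to $(\mbox{LTS}({\cal H}), I, G)$, i.e., $I \subseteq \mbox{dom}(K)$ and $J_{\mathrm{strong}}(\mbox{LTS}({\cal H})^{(K)}, G, s)$ is finite for every $s \in \mbox{dom}(K)$, which by Definition~\ref{def:sol} is precisely a solution to $({\cal H}, I, G)$. The main obstacle, then, is really conceptual rather than computational: making rigorous the claim that the auxiliary-variable construction used in the linearization does not create new admissible actions out of thin air, since without the admissibility equality Theorem~\ref{thm:ctr-refinement} cannot be applied and the whole scheme would collapse — a solver-provided controller for ${\cal L}_{\cal H}$ might prescribe an action that has no corresponding transition in ${\cal H}$.
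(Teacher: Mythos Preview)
Your approach is exactly the one the paper intends: the corollary is stated without proof precisely because it is meant to follow immediately from the two preceding theorems (Theorem~1 giving $\mbox{LTS}({\cal H})\sqsubseteq \mbox{LTS}({\cal L}_{\cal H})$ and Theorem~\ref{thm:ctr-refinement} transferring the solution along the refinement). You are in fact more careful than the paper, which leaves the admissibility side condition of Theorem~\ref{thm:ctr-refinement} entirely implicit; your observation that it holds under totality of $N$ in $U$ (as in the pendulum) is the right way to close that gap.
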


%Let $f(X)$ be a non-linear expression occuring in a constraint in $N$.
%
%Without loss of generality, we suppose that $N$ is a conjunctive 
%predicate $\bigwedge_{i = 1}^n f_i(x, u, y, x') \leq b_i$. 
%Since our predicates contain also discrete variables, we cannot simply apply 
%Taylor overapproximation.
%Let $R=X^r\cup X'^r\cup Y^r\cup U^r$ be the set of continuous variables, 
%and let $W=X^d\cup X'^d\cup Y^d\cup U^d$ be the set of discrete varialbles.
%$N$ can be written as $\bigwedge_{i = 1}^n f_i(r,w) \leq b_i$. 
%For any value $w^*\in {\cal D}_W$, this is equivalent to 
%$\bigwedge_{i = 1}^n \bigwedge_{w^*\in {\cal D}_W} f_{i,w^*}(r) \leq b_i$, 
%where $f_{i,w^*}(r)$ is a function of real variables only.
%
%In order to make the overapproximation tighter, we divide the domain ${\cal D}_r=[a_r,b_r]$ 
%of each variable $r\in R$ in $k_r$ intervals $[a_{r,0}, a_{r,1}], [a_{r,1},a_{r,2}], \ldots, 
%[a_{r,k_{r-1}}, a_{r,k_r}]$, where $a_{r,0}=a_r$ and $a_{r,k_r}=b_r$.
%The transitio relation predicate $N$ is equivalent to the guarded predicate 
%\[
%\begin{array}{l}
%\bigwedge_{i = 1}^n \bigwedge_{w^*\in {\cal D}_W} \bigwedge_{r\in R} \bigwedge_{j\in[k_r]} z_{i,r,j}\rightarrow f_{i,w^*}(r) \leq b_i\\
%\hspace*{.3cm} \land \bigwedge_{r\in R}\bigwedge_{j\in[k_r]} z_{i,r,j}\rightarrow r\leq a_{r,j}\land 
%\bar{z}_{i,r,j}\rightarrow a_{r,j}\leq r 
%\end{array}
%\]
%
%At this point, we obtain a linear overapproximation of $N$, by substituting each 
%$f_{i,w^*}(r)$ with $f^+_{i,w^*}(r)$.

\begin{example}
\label{ex:linearpendulum}

\begin{figure}[tb!]
  \centering
  \includegraphics[width=.8\columnwidth]{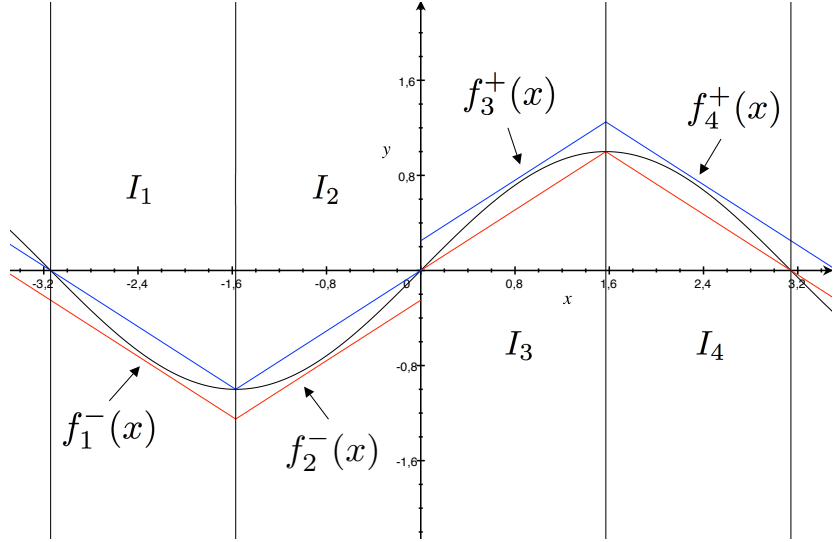}
  \caption{Linearization of $\sin x$ in $[-\pi, \pi]$.}
  \label{fig:linsin}
\vspace*{-0.6cm}
\end{figure}

The DTHS ${\cal H}=(X,U,\varnothing,N)$ model for the inverted pendulum in
Ex.~\ref{ex:dths} contains the nonlinear function $\sin x_1$. We define the
linearization ${\cal L}_{\cal H} = (X, U, Y, \tilde{N})$ as follows. In order to
exploit sinus periodicity, we consider the equation $x_1 = 2 \pi y_{k} +
y_{\alpha}$, where $y_{k}$ represents the period in which $x_1$ lies and
$y_{\alpha} \in [-\pi, \pi]$ represents the actual $x_1$ inside a given period.

This allows us to apply our linearization to $y_{\alpha}$ $\in$ $[-\pi, \pi]$ 
only. We partition the interval $[-\pi, \pi]$ into four sub-intervals $I_1$,
$I_2$, $I_3$, $I_4$ as shown in Fig.~\ref{fig:linsin}. For $y_{\alpha} \in I_1 =
[-\pi, -\frac{\pi}{2}]$ we define $f_1^{+}(y_{\alpha})$ as the line passing
through points $(-\pi, \sin (-\pi))$ and $(-\frac{\pi}{2}, \sin
(-\frac{\pi}{2}))$, i.e. $f_1^{+}(y_{\alpha})$ $=$ $-0.6369 y_{\alpha} + 2$.
Moreover, we define $f_1^{-}(y_{\alpha})$ as the line which is tangent to the
curve $\sin y_{\alpha}$ at $I_1$ medium point, i.e. $f_1^{-}(y_{\alpha})$ $=$
$0.7073 (y_{\alpha} + 0.785) - 0.7068$. Functions $f_2^{\pm}$, $f_3^{\pm}$ and
$f_4^{\pm}$ are obtained analogously.

Finally, we have that $Y = Y^d \cup Y^r = \{y_k, y_q, z_1, z_2, z_3, z_4\} \cup
\{y_{\alpha}\}$ and $\tilde{N} \equiv (x'_1 = x_1 + 2\pi y_q + T x_2)\,\land\,(x'_2 = x_2 +
T {g \over l} y_{\alpha} + T {1 \over m l^2} F u) \land x_1 = 2 \pi y_{k} + y_{\alpha}
\land \bigwedge_{i = 1}^4 z_i \rightarrow f^-_i \leq y_{\alpha} \leq f^+_i
\land \bigwedge_{i = 1}^4 z_i \rightarrow x_1 \in I_i \land \sum_{i=1}^4 z_i
\geq 1$.

%For $x_{1, \alpha} \in I_1 = [-\pi, -\frac{\pi}{2}]$ we define $f_1^{+}(x_{1,
%\alpha})$ as the line passing through the two points $(-\pi, \sin (-\pi))$ and
%$(-\frac{\pi}{2}, \sin (-\frac{\pi}{2}))$. The general equation for a line
%passing through two points $(a_1, b_1)$ and $(a_2, b_2)$ is $\left( \frac{b_1 -
%b_2}{a_1 - a_2}\right)$ $(x - a_1)$ $+$ $b_1$. Thus we have $f_1^{+}(x_{1,
%\alpha})$ $=$ $-0.6369 x_{1, \alpha} + 2$.
%%
%Moreover, we define $f_1^{-}(x_{1, \alpha})$ as the tangent to the curve $\sin
%x_{1, \alpha}$ at $I_1$ medium point. The general equation for the tangent to
%the sinus at point $m$ is $\cos m (x - m)$ $+$ $\sin m$. For $x_{1, \alpha} \in
%I_1 = [-\pi, -\frac{\pi}{2}]$ we have $m$ $=$ $-\frac{\pi}{4}$ and
%$f_1^{-}(x_{1, \alpha})$ $=$ $0.7073 (x_{1, \alpha} + 0.785) - 0.7068$.
%%
%The same technique is used in all other intervals $I_2,$ $I_3,$ and $I_4$.
%
%The second equation of Eq.~(\ref{eq:pendmotion2}) is thus substituted by
%$\dot{x}_2 = {g \over l} y_1 + {1 \over m l^2} u$ where $y_{1}$ is a new real
%variableranging in $[-1.4, 1.4]$ which contains the value for $\sin x_1$.
%Finally, variable $y_1$ appears in the constraints involving functions
%$f_i^{+}(x_{1, \alpha})$ and $f_i^{-}(x_{1, \alpha})$.

\end{example}

\subsection{Linearization: a systematic approach}
When nonlinear subexpressions are ${\cal C}^2$ functions,  
a systematic approach to compute linear overapproximations of a DTHS makes use  
of Taylor polinomial of degree 1 as piecewise affine functions that  
over- and under-approximate the value of a ${\cal C}^2$ function. 
%that appear in the conjunctive predicate that defines the transition relation of ${\cal H}$.
Let $f(x)$ be a ${\cal C}^2$ function of $n$ real variables over a given interval $I$.
By Taylor's theorem, we may derive {\em linear} 
under- and over-approximations for $f(x)$ around a given point $x_0 \in I$
as follows. Namely, we have that exists $t \in [0, 1]$ such that 
$f(x)=f(x_0)+\medtriangledown f(x_0)(x-x_0)+\frac{1}{2}(x - x_0)^TH(x + t(x - x_0)) (x - x_0)$, 
being $H$ the Hessian matrix of $f$.
If we know two real numbers $m$ and $M$ that are the minimum and 
the maximum value of $\frac{1}{2}(x - x_0)^TH(x + t(x - x_0)) (x - x_0)$, in a given 
interval around $x_0$. In this case we can choose 
$f^+(x)=f(x_0)+\medtriangledown f(x_0)(x-x_0)+M$ and 
$f^-(x)=f(x_0)+\medtriangledown f(x_0)(x-x_0)+m$.

%\begin{remark}
%We observe that usually constraints in $N$ has the shape 
%$x'=f(x,u,y)$. 
%This is represented as the conjunction of constraints  
%$f(x,u,y)-x'\leq 0 \,\land\, x'-f(x,u,y)\leq 0$. 
%Our approach implies to find $f^-(x,u,y)\leq x'\leq f^+(x,u,y)$.
%\end{remark}

%From DTHS to DTLHS
        
  \begin{figure*}
  \centering
  \begin{tabular}{ccc}
  \begin{minipage}{0.31\textwidth}
  \includegraphics[width=\columnwidth]{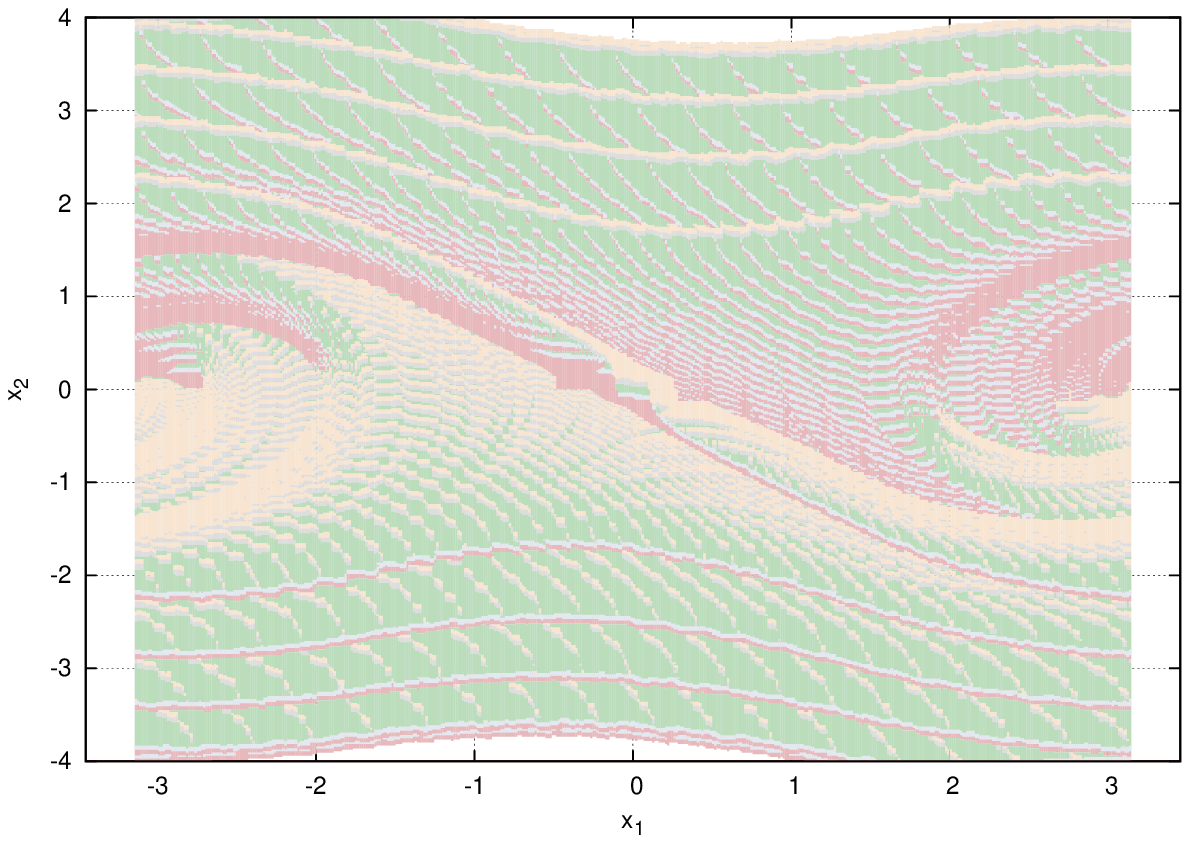}
  \caption{Controllable region for $F=0.5$, $T=0.1$, and $b=9$.}
  \label{fig:controllable-region-05}
  \end{minipage}
&
\begin{minipage}{0.31\textwidth}
  \includegraphics[width=\columnwidth]{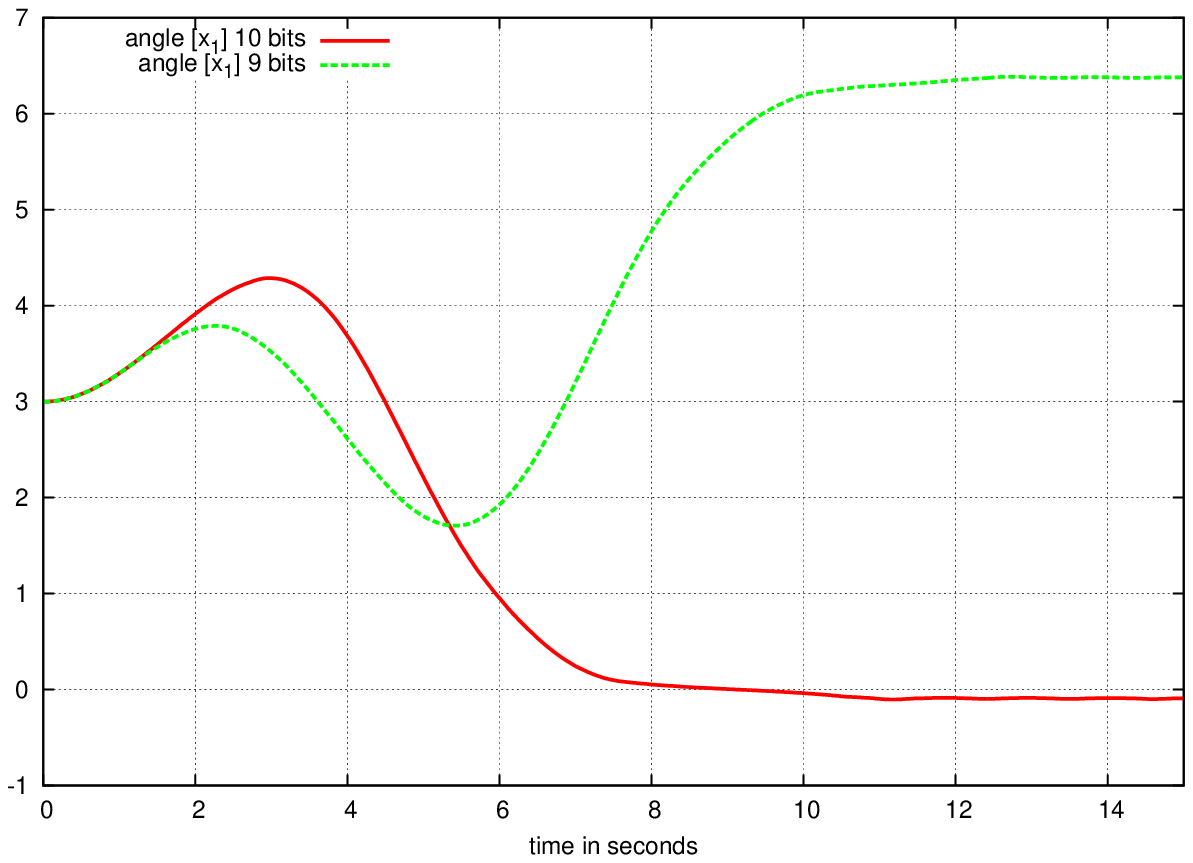}
  \caption{Trajectories for ${\cal H}^{(K_{0.5}^{(9)})}$ and ${\cal H}^{(K_{0.5}^{(10)})}$ starting from $(x_1, x_2) = (\pi, 0)$.} %with $F=0.5$ for the 9 and 10 bit controllers.}
  \label{fig:trajectories-05}
\end{minipage}
&
\begin{minipage}{0.31\textwidth}
  \includegraphics[width=\columnwidth]{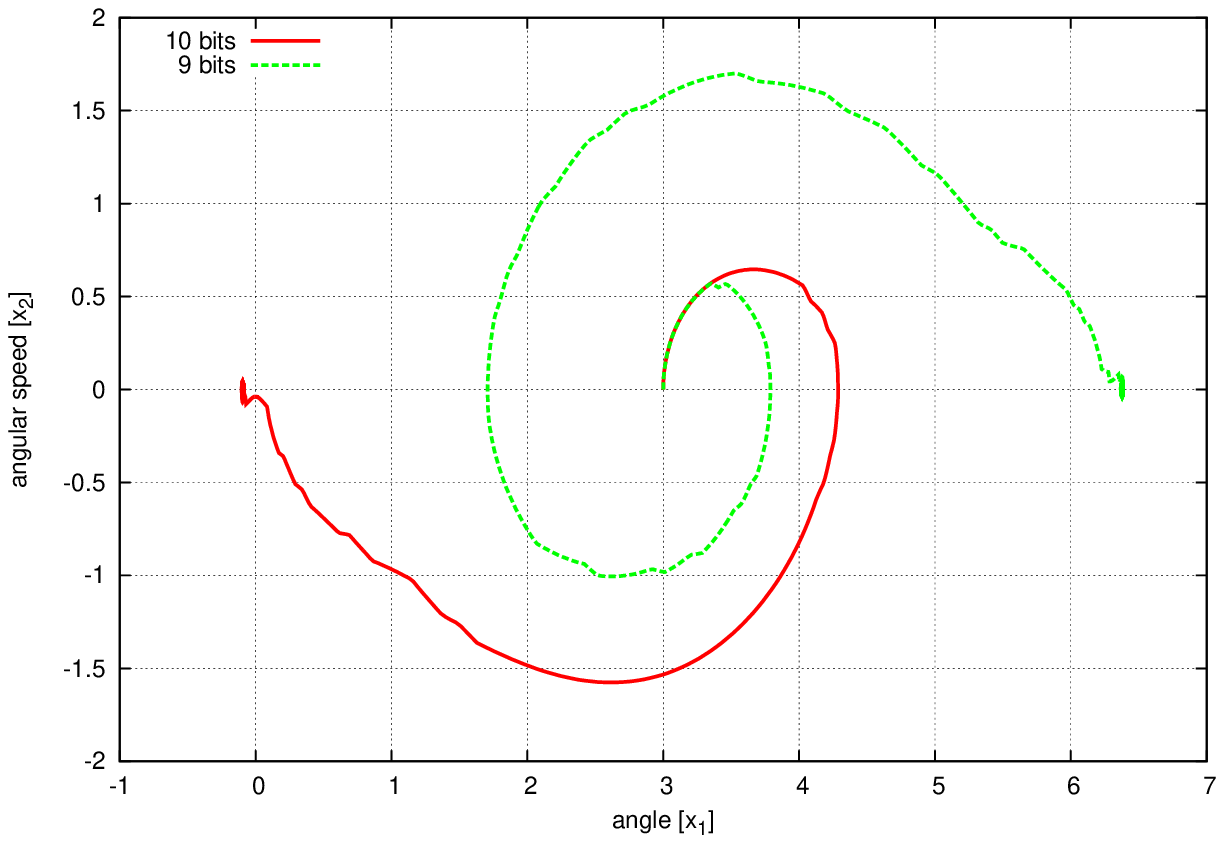}
  \caption{Same trajectories of Fig.~\ref{fig:trajectories-05} in the phases space.} % for ${\cal H}^{(K_{0.5}^{(9)})}$ and ${\cal H}^{(K_{0.5}^{(10)})}$ starting from $(x_1, x_2) = (\pi, 0)$}
  \label{fig:phase-plane-05}
\end{minipage}
\end{tabular}
\vspace*{-0.5cm}
\end{figure*}

\vspace*{-0.1cm}
\section{Experimental Results}
\vspace*{-0.1cm}
In this section we present our experiments that aim at
evaluating effectiveness of our linearization technique. 

\vspace*{-0.1cm}
\subsection{Experimental Settings}
\label{expres.tex.setting}

We present experimental results obtained by using \qks~\cite{qks-cav10} on the
inverted pendulum described in Example~\ref{ex:dths}. In order to let \qks\
handle such a case study, we linearize the DTHS ${\cal H}$ in
Example~\ref{ex:dths} with the DTLHS ${\cal L}_{\cal H}$ of
Example~\ref{ex:linearpendulum}. In all our experiments, as in~\cite{KB94} we
set  parameters $l$ and $m$ in such a way that  $\frac{g}{l}=1$ (i.e. $l=g$) and
$\frac{1}{ml^2}=1$ (i.e. $m=\frac{1}{l^2}$). As for the quantization, we set $A_{x_1} =
[-1.1\pi, 1.1\pi]$ and $A_{x_2} = [-4, 4]$, and we define $A = A_{x_1} \times
A_{x_2} \times A_u$. Moreover, we use uniform quantization functions dividing the domain of each state
variable ($x_{1}, x_2$) into $2^b$ equal intervals, where $b$ is the number of
bits used by AD conversion. The resulting quantization is ${\cal Q}_b = (A,
\Gamma_b)$, with $\|\Gamma_b\| =  \frac{8}{2^b}$. 
Since we have two quantized
variables ($x_1, x_2$) each one with $b$ bits,  the number of quantized
(abstract) states %in the control abstraction 
is exactly $2^{2b}$.   
Finally, the initial region $I$ and goal region $G$ are as
in Ex.~\ref{ex:pendulum-goal}, thus the DTHS [DTLHS] control problem we consider
is  $P$ = (${\cal H}$, $I$, $G$) [(${\cal L}_{\cal H}$, $I$, $G$)].

%$=$  $(|i_L| \leq 2)$ $\wedge$ $(0 \leq v_O \leq 7)$,
%is the whole
%admissible region.  
%Thus $I$ = $(-5 \leq i_L \leq 5) \wedge (0 \leq v_O \leq
%7)$. 
%Our objective is to drive output $v_O$ to $V_{ref} = 5$ V with a tolerance
%$\theta = 0.01$ V, while keeping $|i_L| \leq 2$. Thus our goal region is
%defined by the convex predicate $G$ = $\left(4.99 \leq v_O \leq 5.01 \land -2 \leq i_L
%\leq 2\right)$. 
%DTLHS ${\cal H}$ has
%2 continuous state variables ($i_L$, $v_O$), 
%one input variable ($u$),
%4 continuous auxiliary variables ($i_u$, $i_D$, $v_u$, $v_D$) 
%and 12 boolean auxiliary variables (including $q$ as in Ex. \ref{example-buck.tex})
%stemming from the modelling of robustness with respect to $R$.
%
%Note that no (formally proved) robust control software 
%is available for buck DC-DC converters. 
%
%We note that robust (analog) control schemas for buck dc-dc converters
%have  been investigated, for example, in \cite{robust-buck-pes95,robust-buck-2008}. 
%
%
%\input{params.fig.tex}
%
%\vspace*{-1mm}
%
%\subsection{Quantization}
%
%\subsection{Results from Experiments}
%\label{results:subsec}

%\vspace*{-1mm}

We run \qks\ for different values of the remaining parameters, i.e. $F$ (force
intensity), $\rho$ (goal tolerance), $T$ (sampling time), and $b$ (number of
bits of AD).  For each of such experiments, \qks\ outputs a control software $K$
in C language. In the following, we sometimes make explicit the dependence on
$F$ and $b$ by writing $K_{F}^{(b)}$. In order to evaluate performance of $K$,
we use an  {\em inverted pendulum  simulator} written in C. The simulator
computes the next state by using Eq.~(\ref{eq:pendmotion2}) of
Ex.~\ref{ex:dths},   thus simulating a path of ${\cal H}^{(K)}$. Such simulator
also implements the following features:

\begin{itemize}

	\item random disturbances (up to 4\%) in the next state computation
	are introduced, in order to
assess $K$ robustness w.r.t. non-modelled disturbances;

	\item Eq.~(\ref{eq:pendmotion2}) is translated into the discrete time
version by means of a simulation time step $T_s$ much smaller than the sampling
time $T$  used in ${\cal H}$ (and ${\cal L}_{\cal H}$). Namely, $T_s = 10^{-6}$
seconds, whilst $T = 0.01$ or $T = 0.1$ seconds. This allows us to have a more
accurated simulation. Accordingly, $K$ is called each $10^4$ (or $10^5$)
simulation steps of ${\cal H}$. When $K$ is not called, the last chosen action
is selected again ({\em sampling and holding}). 

\end{itemize}

All experiments have been carried out on an Intel(R) Xeon(R) CPU @ 2.27GHz, 
with 23GiB of RAM, Kernel: Linux 2.6.32-5-686-bigmem, distribution Debian
GNU/Linux 6.0.3 (squeeze).

\subsection{Underactuated Inverted Pendulum ($F = 0.5$)}

In order to stabilize an {\em underactuated} inverted pendulum (i.e. when $F<
1$)  from the hanging position to the uprigth position, a controller  needs to
find a non obvius strategy that consists of swinging the pendulum once or more
times   to gain enough momentum. We show that \qks\ is able to synthesize such a
controller by running it on ${\cal L}_{\cal H}$ where $F=0.5$ (note that
in~\cite{KB94} $F=0.7$). Results are in Tab.~\ref{expres.table.tex}, where each
row corresponds to a \qks\ run. Columns meaning in Tab.~\ref{expres.table.tex}
are as follows. Columns $b$, $T$ and $\rho$ show the corresponding inverted
pendulum parameters. Column $|K|$ shows the size of the C code for
$K_{0.5}^{(b)}$. Finally, columns {\bf CPU} and {\bf RAM} show the computation
time (in seconds) and RAM usage (in KB) needed by \qks\ to
synthesize $K_{0.5}^{(b)}$.

As for $K_{0.5}^{(b)}$ performance, it is easy to show that by reducing the
sampling time $T$ and the quantization step (i.e. increasing $b$), we increase
the quality of $K_{0.5}^{(b)}$ in terms of ripple, set-up time and coverage.  In
fact, Fig.~\ref{fig:trajectories-05} shows the simulations of ${\cal
H}^{(K_{0.5}^{(9)})}$ and ${\cal H}^{(K_{0.5}^{(10)})}$.  As we can see,
$K_{0.5}^{(10)}$ drives the system to the goal with a smarter trajectory,  with
one swing only. This have a significant impact on the set-up time  (the system
stabilizes after about $8$ seconds when controlled by $K_{0.5}^{(10)}$ instead of about  $10$
seconds required when controlled by $K_{0.5}^{(9)}$).
Fig.~\ref{fig:controllable-region-05} shows that the {\em controllable region}
of $K_{0.5}^{(9)}$ (i.e., ${\rm dom}(K_{0.5}^{(9)})$) covers almost all
states in the admissible region that we consider. Different colors mean different 
set of actions enabled by the controller. We observe that the mgo solution 
enables more than one action in a significant portion of the controllable region. 
The control software, however, is generated in such a way that one action is chosen 
in each state.
Finally,
Fig.~\ref{fig:ripple-05} shows the ripple of $x_1$ for ${\cal H}^{(K_{0.5}^{(10)})}$
inside the goal. Note that such ripple is very low (0.018 radiants).

%In order to stabilize an {\em underactuated} inverted pendulum (i.e. when $F< 1$) 
%from the hanging position to the uprigth position, a controller 
%needs to find a non obvius strategy that consist of swinging the pendulum once or more times  
%to gain enough momentum. 
%We synthesise a family of controllers for the system with $F=0.5$,
%We use different parameters ($T=0.1$ and $T=0.01$) and different quantizations 
%(from $8$ to $11$ bit). 
%By reducing the sampling time and the quantization step, we increase the quality of 
%the controller in terms of ripple and set-up time. On the other hand, this increases 
%computational resuorces required by the synthesis and 
%the size of the control software.
%Tab.~\ref{expres.table.tex} summarizes $^\bigstar$ {\bf Table description}. 
%Fig.~\ref{fig:controllable-region-05} shows that using $9$ bits for the quantization, 
%we control almost all states in the admissible region that we consider. 
%Fig.~\ref{fig:trajectories-05} shows the simulation of the closed loop system 
%obtained with the $9$ and the $10$ bit controllers. 
%As we can see, the $10$ bit controllers drive the system to the goal with a smarter trajectory, 
%with one swing only. This have a significant impact on the set-up time 
%(the system stabilizes after about $8$ seconds with the $10$ bit controller instead of about 
%$10$ seconds required by the system with the $9$ bit controller). 
%%This example show that finer strategies are found decreasing the quantization step.

\begin{figure*}
  \centering
  \begin{tabular}{ccc}
\begin{minipage}{0.31\textwidth}
  \includegraphics[width=\columnwidth]{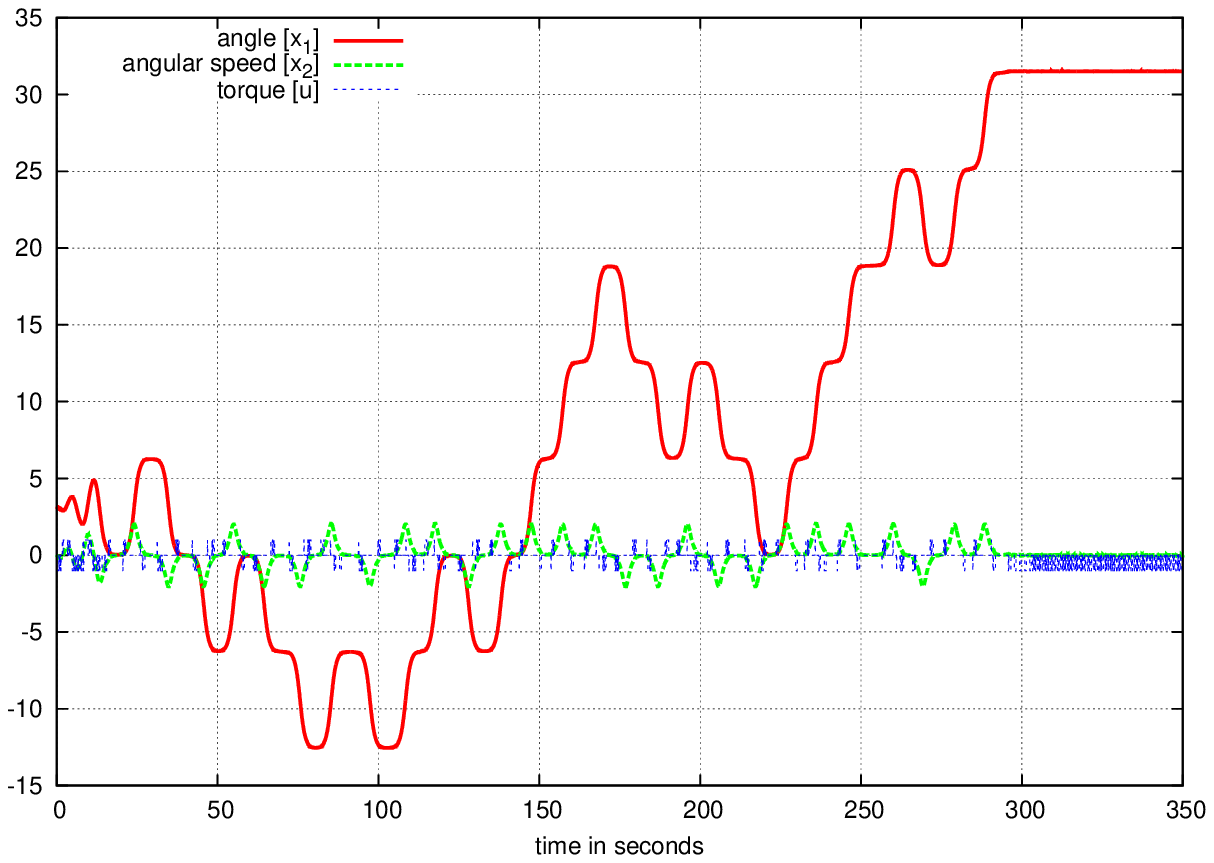}
  \caption{Simulation for ${\cal H}^{(K_{2}^{(11)})}$ starting from $(x_1, x_2) = (\pi, 0)$.}
  \label{fig:traj-03}
\end{minipage}
&
  \begin{minipage}{0.31\textwidth}
  \includegraphics[width=\columnwidth]{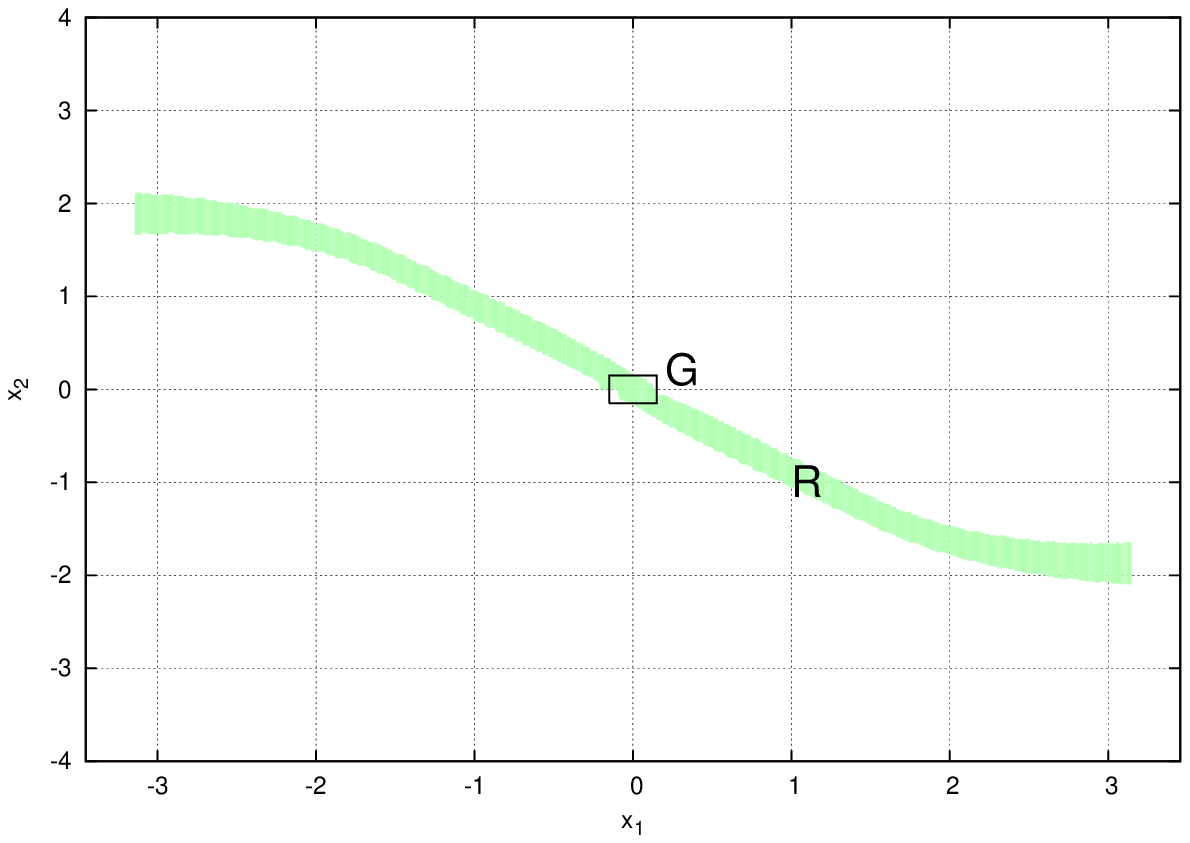}
  \caption{States turned directly to the goal with $F =0.3$.}
  \label{fig:layer-u0.3}
  \end{minipage}
&
\begin{minipage}{0.31\textwidth}
  \includegraphics[width=\columnwidth]{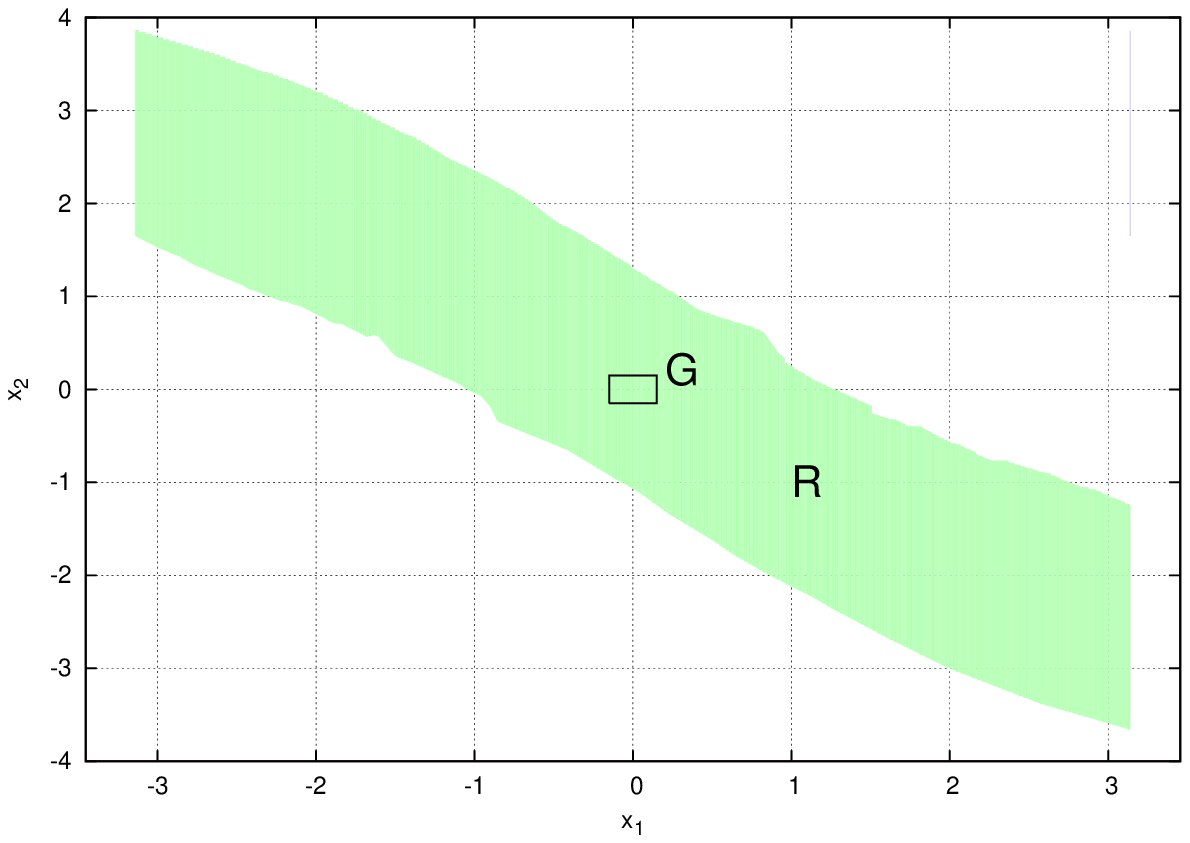}
  \caption{States turned directly to the goal with $F=2$.}
  \label{fig:layer-u2}
\end{minipage}
\end{tabular}
\vspace*{-0.5cm}
\end{figure*}

\subsection{Very Underactuated Inverted Pendulum  ($F = 0.3$)}

%\subsection{Strange Case of Dr. Controller and Mr. Pendulum}

We succeeded to find controllers for the inverted pendulum for any value of $F$
down to $0.3$, with $T=0.1$ seconds and $\rho=0.1$. However, simulations show
that the behaviour of the resulting closed loop system is somewhat puzzling. As
it is shown in Fig.~\ref{fig:traj-03} for ${\cal H}^{(K_{0.3}^{(11)})}$, after
three swings the pendulum is correctly driven to the goal, but at that point
the controller is not able to maintain the plant inside the goal. In fact,
the controller let the pendulum fall and makes it do a complete round in order
to reach again the upright position.  This behaviour is repeated 27 times,
before the $K_{0.3}^{(11)}$ makes pendulum stabilize into the goal region.

As already noted in \cite{KB94}, all controllers for underactuated pendulum  use two very
different strategies to stabilize the system  depending on the initial state.
When the angle is positive and the speed is negative (and in a suitable  range
that depends on $F$), the controller turns directly the pendulum into the
upright position. Symmetrically, this also happens when the angle is negative
and the speed is positive.  Otherwise the controller let the pendulum fall down to gain
enough momentum (or to smoothly slow down it). Therefore, starting from very
near states may lead the system to follow  very different trajectories.     
Reducing $F$ squeezes the region of states from which the pendulum is directly
turned into  the upright position. As Fig.~\ref{fig:layer-u0.3} shows, when $F$
is equal to $0.3$,  we have a rather pathological situation:   the frontier
between the two strategies lies {\em inside} the goal region.  
%Because of 
%disturbances introduced by the simulator,  
%the
%small discrepancy between the 
%sampling time $T=0.1$ and the almost continuous model of the simulator  (that
%simulates the system with $T_s=10^{-6}$), 
The controller sometimes is unable to keep the system inside the goal, because disturbances 
introduced by the simulator make the system cross the frontier between the two strategies. 
When this frontier lies far enough from the goal  (see Fig.~\ref{fig:layer-u2} for the
case $F=2$), this phenomenon is essentially harmless and leads,  at worst, to
suboptimal strategies.

%In Fig.~\ref{fig:layer-u0.3} and Fig.~\ref{fig:layer-u2}, 
%we show the region $R$ of states from which the pendulum is turned directly 
%into the upright position for $F=0.3$ and $F=2$. As we can see, in the former case, 
%$R$ is very small and even some goal states are outside $R$.

%\begin{figure}
%  \centering
%  \includegraphics[width=0.75\columnwidth]{pictures_def/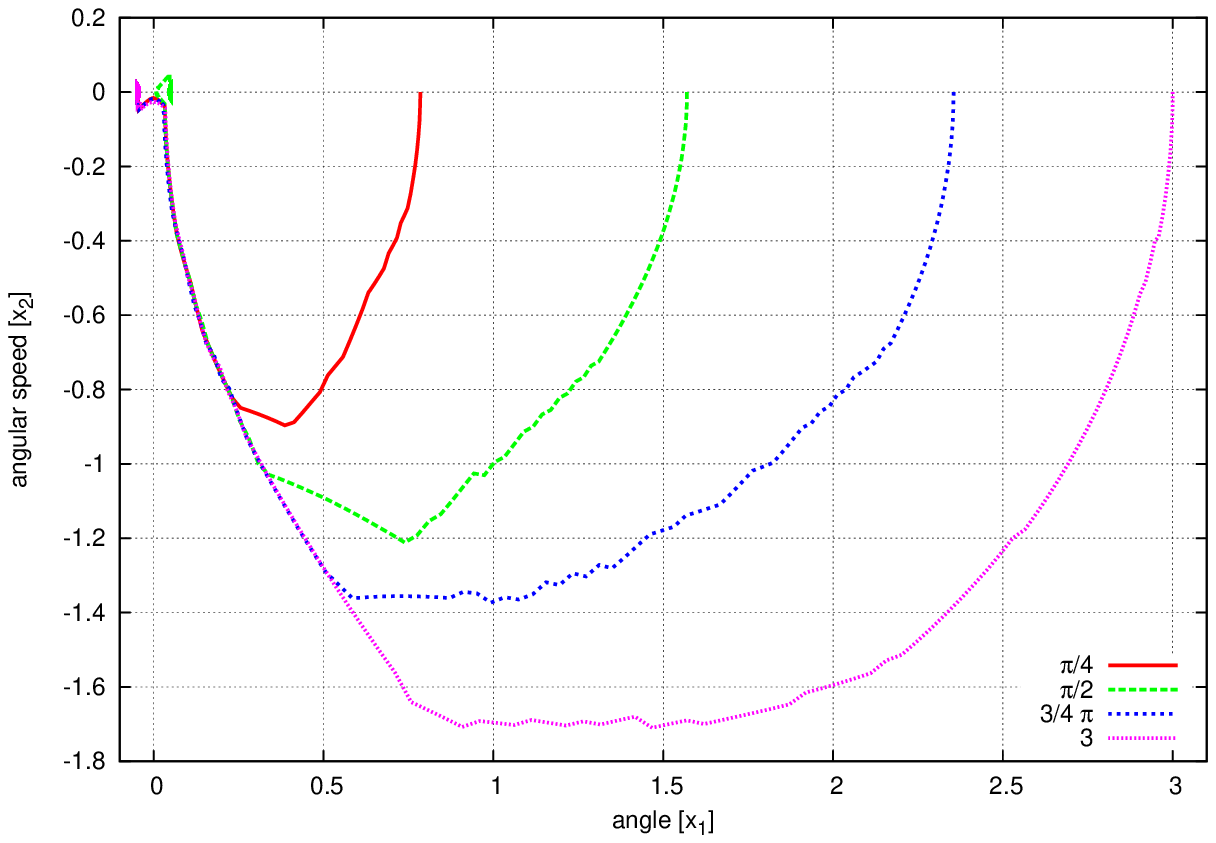}
%  \caption{Phases space for simulations ($x_1 \in \{-\pi, -{\pi \over 2}, {\pi \over 2}, \pi\}$).}
%  \label{fig:mulpha}
%\end{figure}

\begin{figure}
  \centering
  \begin{tabular}{ccc}
  \begin{minipage}{0.42\textwidth}
  \centering
  \includegraphics[width=\textwidth]{phases.56.eps}
  \caption{${\cal H}^{(K_{2}^{(11)})}$ phases space (starting from $x_1 \in \{{\pi \over 4}, {\pi \over
  2}, {3\pi \over 4}, 3\}$).} %($x_1 \in \{-\pi, -{\pi \over 2}, {\pi \over 2}, \pi\}$).}
  \label{fig:mulpha}
  \end{minipage}
&
\begin{minipage}{0.42\textwidth}
  \includegraphics[width=\textwidth]{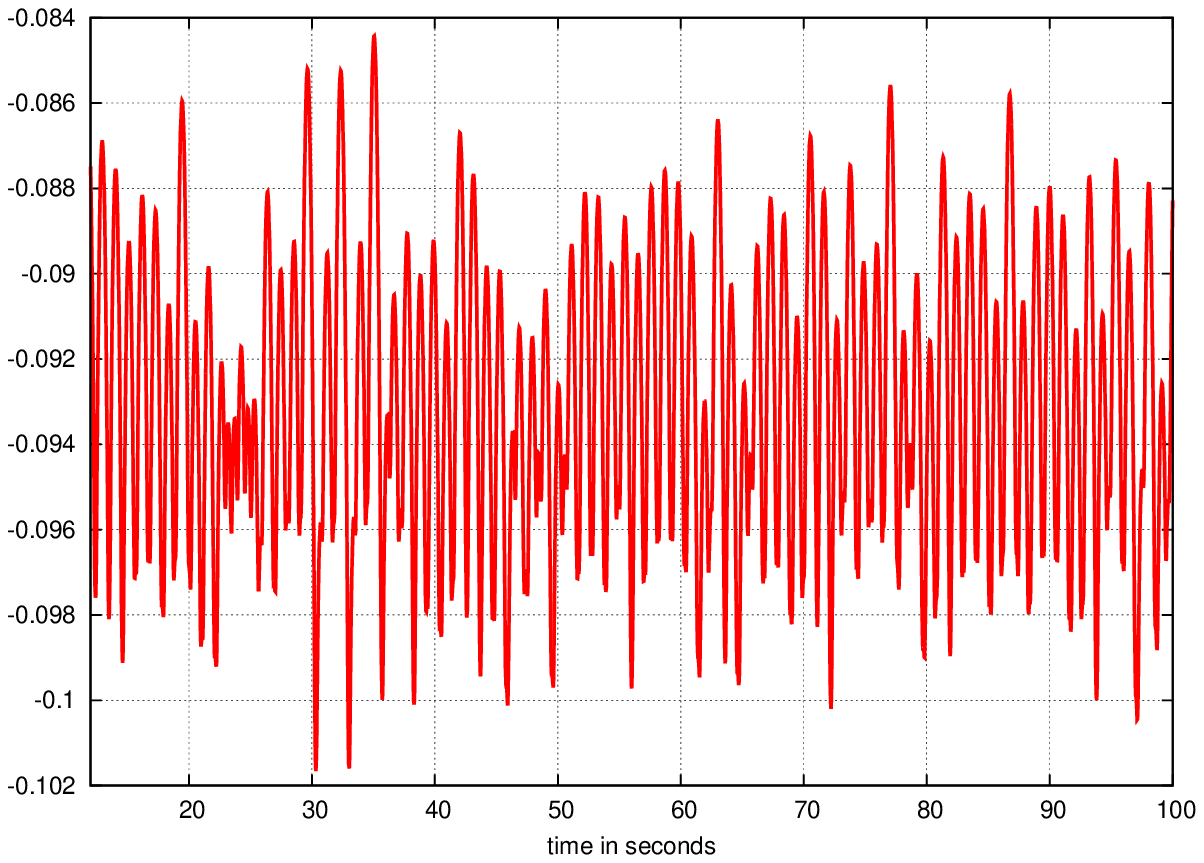}
  \caption{Ripple of $x_1$ for ${\cal H}^{(K_{0.5}^{(10)})}$.}
  \label{fig:ripple-05}
\end{minipage}
\end{tabular}
%\vspace*{-0.5cm}
\end{figure}

\subsection{Overactuated Pendulum ($F=2$)}

When $F$ is greater than 1, finding a control strategy is less challenging.  It
is worth noting however that, even in this case,  our approach allows us to find
controllers that hardly can be  synthesized by means of traditional analytical
methods. In Fig.~\ref{fig:mulpha}, we show trajectories in the phases space of
${\cal H}^{(K_{2}^{(11)})}$ with $T=0.01$ seconds, $\rho = 0.05$,  and starting
values for $x_1$ are in $\{{\pi \over 4}, {\pi \over 2}, {3\pi \over 4}, 3\}$ and
$x_2 = 0$. ${\cal H}^{(K_{2}^{(11)})}$ follows highly non-smooth trajectories: 
$K_{2}^{(11)}$ drives the system along an optimal 
approach to the goal. Before joining this ideal path to the goal, the
controller,  in order to optimize the set up time, drives the system at the
maximum possible ``cruising'' speed  that allows the pendulum to be stopped in the goal. 
For higher values of $F$, this cruising speed is even higher.
%\begin{enumerate}
%  \item Pendulum parameters (u = 2, u = 4). The same as in
%    \cite{KB94}.
%
%    \begin{enumerate}
%
%    \item $T_s = 0.01$ s.
%
%    \item $l = g$ m.
%
%    \item $m = {1 \over l^2}$ kg.
%
%    \item $u \in \{-2, 0, 2\}$ and $u \in \{-4, 0, 4\}$.
%
%    \item So state space equations are:
%      $$
%      \left\{\begin{array}{l}
%          x'_1 = x_1 + 0.01 x_2 \\
%          x'_2 = x_2 + 0.01 \sin x_1 + 0.01 u
%        \end{array}
%      \right.
%      $$
%
%    \item We then linearize with 4 intervals as described in
%      Sect.~4. % put ref here.
%
%    \end{enumerate}
%
%  \item Admissible region: $x_1 \in [-3\pi, 3\pi]$ $\land$ $x_2 \in
%    [4, 4]$.
%
%  \item Goal region: $x_1 \in [-0.05, 0.05]$ $\land$ $x_2 \in
%    [-0.05, 0.05]$.
%
%  \item Table description. Tab.~\ref{expres.table.tex}.
%
    
\begin{table}[!tb]
  \centering
  \small

  \caption{Experimental Results for inverted pendulum with $F=0.5$.}\label{expres.table.tex}

  \begin{tabular}{ccc|ccc}
    \toprule
    $b$ & $T$ & $\rho$ & $|K|$ & CPU & MEM\\
    \midrule

8 & 0.1 & 0.1 & 2.73e+04 & 2.56e+03 & 7.72e+04 \\
9 & 0.1 & 0.1 & 5.94e+04 & 1.13e+04 & 1.10e+05 \\
10 & 0.1 & 0.1 & 1.27e+05 & 5.39e+04 & 1.97e+05 \\
11 & 0.01 & 0.05 & 4.12e+05 & 1.47e+05 & 2.94e+05 \\
%     8 & 2 & 4.43e+02 & 1.40e+02 & 5.42e+04 & {\sc NoSol}\\
%     9 & 2 & 9.00e+02 & 5.87e+02 & 9.17e+04 & {\sc NoSol}\\
%    10 & 2 & 4.25e+04 & 2.56e+03 & 1.41e+05 & {\sc NoSol}\\
%    11 & 2 & 3.08e+05 & 1.10e+04 & 3.51e+05 & {\sc NoSol}\\
%    11 & 4 & 2.54e+05 & 1.10e+04 & 3.54e+05 & {\sc NoSol}\\

    \bottomrule
  \end{tabular}

%\vspace*{-0.6cm}
\end{table}

  \section{Conclusions}

We presented an automatic methodology to sinthesize  
control software for nonlinear Discrete Time Hybrid Systems. 
The control software is correct-by-construction with respect both 
System Level Formal Specifications of the closed loop system and Implementation Specification, 
namely the quantization schema. 
Our experimental results on the inverted pendulum benchmark show the effectiveness
of our approach and that we synthesize near optimal controllers that hardly can be 
designed by using traditional analytical methods of Control Engineering. 

The present work can be extended in several directions. 
First of all, it would be interesting to consider control synthesis of controllers 
that are optimal with respect a cost function given as input of the control problem, 
rather than simply time-optimal.
Another natural possible future research direction is to investigate
fully symbolic control software synthesis algorithms
based, for example, on efficient quantifier elimination procedures, in order 
to efficiently deal with Hybrid Systems with several continuous state variables.

%\vfill

%\bibliographystyle{apalike}
\bibliographystyle{plain}
\small
%\bibliography{modelchecking}

\end{document}